\documentclass{cta-author}

\usepackage{graphicx}
\usepackage{amsmath, amssymb, amsthm, mathtools, stmaryrd, bm, gensymb} 
\usepackage{url}
\usepackage{tikz, pgfplots}
\usepackage{siunitx}
\usepackage{algorithm, algpseudocode}	
\usepackage{siunitx}

\mathtoolsset{showonlyrefs} 

\usetikzlibrary{plotmarks}

\newcommand{\mc}[1]{\ensuremath{\mathcal{#1}}}				
\newcommand{\bmm}[1]{\ensuremath{\mathbb{#1}}}				
\newcommand{\tran}{\mathsf{T}}						
\newcommand{\hermit}{\mathsf{H}}					
\newcommand{\frob}[1]{\ensuremath{\left\|#1\right\|_\textrm{F}}}	
\newcommand{\esp}[1]{\ensuremath{\mathbb{E}\left[#1\right]}}		

\newcommand{\Rss}{\bm{R}_{ss}}
\newcommand{\Rbb}{\bm{R}_{bb}}
\newcommand{\Rxx}{\bm{R}_{xx}}
\newcommand{\pxs}{\bm{p}_{xs}}
\newcommand{\edes}{\bm{e}_{d}}

\newtheorem{theorem}{Theorem}

\begin{document}

\supertitle{Research Article}

\title{Low-Complexity Separable Beamformers for Massive Antenna Array Systems}

\author{\au{Lucas N. Ribeiro$^{1\corr}$}, \au{Andr\'e L. F. de Almeida$^{1}$}, \au{Josef A. Nossek$^{1,2}$}, \au{Jo\~ao C\'esar M. Mota$^1$}}

\address{
	\add{1}{Wireless Telecommunications Research Group, Federal University of Cear\'a, Fortaleza, Brazil.}
	\add{2}{Department of Electrical and Computer Engineering, Technical University Munich, Munich, Germany.}
\email{nogueira@gtel.ufc.br}}

\begin{abstract}
	
	Future cellular systems will likely employ massive bi-dimensional arrays to improve performance by large array gain and more accurate spatial filtering, motivating the design of low-complexity signal processing methods. We propose optimising a Kronecker-separable beamforming filter that takes advantage of the bi-dimensional array geometry to reduce computational costs. The Kronecker factors are obtained using two strategies: alternating optimisation, and sub-array minimum mean square error beamforming with Tikhonov regularization. According to the simulation results, the proposed methods are computationally efficient but come with source recovery degradation, which becomes negligible when the sources are sufficiently separated in space.
	
\end{abstract}

\maketitle

\section{Introduction}

The number of wireless connected devices has been growing significantly, bringing new challenges to engineers. Future mobile communications systems, for example, are expected to provide very high throughput to several mobile terminals. In order to boost system capacity, new transceiver and network architectures are under investigation. Massive multiple-input-multiple-output (MIMO) technology, which consists of employing a large number of antenna elements at the base station (BS) to serve many multi-antenna users, is expected to yield significant spectral efficiency improvement \cite{larsson_massive_2014,schwarz_society_2016}. Such massive systems should be implemented using planar arrays in order to reduce the array's physical dimensions and to perform elevation and azimuth beamforming. This implementation, known as full-dimension MIMO (FD-MIMO), allows for better interference mitigation and has already been incorporated into 3GPP standards~\cite{ji_overview_2017}. These technologies pose new engineering challenges concerning computational and energy efficiency~\cite{ribeiro2018energy}, calling for research efforts to design computationally efficient signal processing methods for high-dimensional systems.~\footnote{\textcolor{red}{This paper is a preprint of a paper accepted by IET Signal Processing and is subject to Institution of Engineering and Technology Copyright. When the final version is published, the copy of record will be available at the IET Digital Library}} 

The high computational complexity of multidimensional filtering systems is not a new problem, though, and the first attempts to tackle this problem can be traced back to some decades ago. For instance, the authors in \cite{treitel_design_1971} proposed a multi-stage representation for bi-dimensional filters based on the coefficient matrix eigendecomposition, yielding computer storage and speed savings. However, computing the eigendecomposition of high-dimensional observations is expensive in general. More recent works have been interested in exploiting the algebraic structure present in some problems to reduce computational costs and to improve system performance. The authors in~\cite{bousse_tensor-based_2017} introduced tensor-based blind source separation methods which reduce the number of parameters to be estimated by exploiting the structure of low-rank signals. Such property implies that signals can be well approximated by a finite sum of low-dimensional Kronecker products. Although this representation simplifies the parameter estimation problem, signal accuracy is degraded. Kronecker separability has also been exploited in~\cite{rupp_tensor_2015,rupp2015gradient,pinheiro_nonlinear_2016} to increase the convergence rate of adaptive algorithms. Gradient descent-based solutions were presented in \cite{rupp_tensor_2015,rupp2015gradient} to identify second-order Kronecker separable systems, which can be useful to model telephone hybrid causing electrical echoes. In~\cite{pinheiro_nonlinear_2016}, the authors show that Volterra systems with separable kernels can be expressed in terms of Kronecker products. In~\cite{ribeiro2015identification}, we introduce a supervised system identification method to identify third-order Kronecker separable impulse responses based on alternating optimisation. The proposed identification method is applied to identify the telephone hybrid-like impulse response of~\cite{rupp_tensor_2015}. Simulation results indicate that the proposed method exhibits better accuracy than the classical Wiener filter solution. In~\cite{paleologu2018linear}, the method of~\cite{ribeiro2015identification} is extended to cope with low-rank Kronecker separable systems, allowing for the identification of more intricate acoustic responses. In \cite{elisei2018efficient}, fast recursive least squares methods for identifying second-order Kronecker separable (bilinear) systems are presented. Analytical and simulation results confirm the low computational costs and the identification performance of the proposed bilinear methods.

It is well-known that the spatial signature of planar arrays can be decomposed along its two dimensions~\cite{van_trees_optimum_2002}. Based on this property, beamforming techniques have been proposed. The authors in~\cite{wang_two-dimensional_2017_top} obtained a low-complexity two-dimensional MIMO precoding scheme by exploiting the Kronecker structure in the steering vectors of rectangular arrays. Therein, the proposed separable zero-forcing (ZF) precoder is presented based on the small angular spread at the elevation domain assumption, which enables algebraic separation of the azimuth and elevation domains by filtering. Results show that when this assumption is satisfied, the proposed separable ZF filter exhibits acceptable performance. However, in more realistic scenarios, this assumption is seldom met, and the performance of the separable ZF filter is severely degraded. In~\cite{zhu2017hybrid}, a clever hybrid analogue/digital beamforming method based on the Kronecker product is proposed for multi-cell multi-user MIMO systems. The analogue beamformers are designed exploiting the mixed product property of the Kronecker product to null inter-cell interference and to enhance the desired signal power. In \cite{miranda_generalized_2015}, the authors investigate the performance of a tensor global sidelobe canceller (GSC). Simulation results suggest that this tensor-based beamformer requires fewer snapshots than the classical GSC filter to achieve the desired performance. A tensor minimum variance distortionless response beamformer has been introduced in \cite{liu2018robust} for polarization sensitive arrays. In~\cite{ribeiro_low-complexity_nodate}, we express the received signal vector of a massive MIMO system equipped with a planar array using multilinear (tensor) algebra. From this model, we derived a two-step low-complexity equaliser which exploits each signal dimension, similar to~\cite{wang_two-dimensional_2017_top}. It basically consists of sub-array ZF beamforming followed by a low-dimensional minimum mean square error (MMSE) equaliser.

In this present work, we propose novel beamforming techniques which exploit the array separability to reduce their computational costs. Our methods are based on the classical MMSE beamformer, also known as Wiener filter~\cite{haykin_adaptive_1996}. This filter may be computationally impractical due to the inversion of a possibly very large covariance matrix. The matrix inversion lemma~\cite{petersen_matrix_2012} can be applied to the MMSE beamformer when the signal statistics are perfectly known, however, in practice, this seldom happens. As alternatives to the classical MMSE solution, we propose methods which aim at optimising a beamforming filter with Kronecker structure, i.e., the coefficients vector admits a Kronecker factorization. Thus, instead of optimising a large beamforming vector, we propose designing two relatively small beamforming vectors corresponding to the Kronecker factors. We present two strategies to design a Kronecker separable filter. In the first strategy, the mean square error (MSE) function is minimised by means of alternating optimisation. This strategy was first introduced in \cite{ribeiro_tensor_2016}, where the beamforming filter is obtained using sample estimates of the received signal covariance matrix. Here, we derive analytical expressions for the beamformer assuming perfect knowledge of the array manifold matrix. The second strategy consists of a closed-form solution based on the Khatri-Rao factorization of the separable array manifold matrix. Each sub-beamformer is obtained by performing sub-array MMSE beamforming with Tikhonov regularization. Simulation results show that the proposed methods can be computationally efficient, however, they come with source recovery degradation, which becomes insignificant when the wavefronts are sufficiently separated in the space.

The following notation is adopted throughout the paper: $x$ denotes a scalar, $\bm{x}$ a vector, and $\bm{X}$ a matrix. The $(i,j)$-th entry of $\bm{X}$ is given by $[\bm{X}]_{i,j}$. The transposed, conjugated transposed (Hermitian), and pseudo-inverse of $\bm{X}$ are denoted by $\bm{X}^\tran$, $\bm{X}^\hermit$, and $\bm{X}^\dagger$, respectively. The $(M\times M)$-dimensional identity matrix is represented by $\bm{I}_M$. The absolute value, the Frobenius and $\ell_2$ norms, and the expected value operator are respectively denoted by $|\cdot|$, $\frob{\cdot}$, $\| \cdot \|_2$, and $\esp{\cdot}$. The Kronecker, Khatri-Rao, and $n$-mode products are represented by $\otimes$, $\diamond$, and $\times_n$, respectively. $O(\cdot)$ represents the Big-O notation.

This work is organized as follows: the system model is introduced in~Section~\ref{sec:signal} and the proposed beamforming methods are presented in Section~\ref{sec:methods}. Therein, we also discuss their computational complexity. Simulation results are shown and discussed in Section~\ref{sec:simulations}, and the work is concluded in Section~\ref{sec:conc}.

\section{System  Model} \label{sec:signal}

Consider a multi-antenna system equipped with a uniform rectangular array (URA) with $N_h$ antennas in the horizontal axis, and $N_v$ in the vertical axis. This array of $N = N_h N_v$ antennas is distributed along the $y$-$z$ plane, as illustrated in Figure~\ref{fig:array}. Each antenna element has the same beam pattern $g(\phi, \theta)$, where $\phi$ and $\theta$ denote the azimuth and elevation angles, respectively\footnote{In practical antenna arrays, the element beampatterns would be different due to phenomena like mutual coupling, among others. To model such scenario, one would need to consider individual antenna beampatterns $g_n(\phi,\theta)$ for all $n \in \{1,\ldots,N\}$.}. The array is illuminated by $R$ independent narrow-band wavefronts in far-field propagation arriving from directions $(\phi_r, \theta_r),\,r=1,\ldots,R$ and carrying digitally-modulated signals. The wavefronts are assumed to have the same wavelength $\lambda$. The modulated signals at discrete-time instant $k$ are denoted by $s_r[k]$ and assumed to be mutually uncorrelated with zero mean and variance~$\sigma_{s}^2$.

The received signal at the $n$-th antenna can be modelled as the superposition of the $R$ incoming wavefronts:
\begin{equation} \label{eq:recv}
	x_n[k] = \sum_{r=1}^R g(\phi_r, \theta_r) a_n(\phi_r, \theta_r) s_r[k] + b_n[k],
\end{equation}
where $a_n(\phi_r, \theta_r)$ denotes the array response to the $r$-th wavefront at the $n$-th antenna, and $b_n[k]$ the complex additive white Gaussian sensor noise (AWGN) with zero mean and variance $\sigma_b^2$. The inter-antenna spacing in both the horizontal and vertical axes is $d_h=d_v=\lambda/2$, thus the array response can be written as
\begin{equation} \label{eq:antresp}
	a_n(\phi_r, \theta_r) = e^{j \pi[(n_h-1)\sin\phi_r\sin\theta_r+(n_v-1)\cos\theta_r]}.
\end{equation}
with $n = n_h + (n_v-1)N_h$, $n_h \in \{1,\ldots, N_h\}$, $n_v \in \{1,\ldots, N_v\}$. For notation simplicity, we define direction cosines with respect to the horizontal and vertical axis as $p_r = \sin\phi_r\sin\theta_r$ and $q_r = \cos\theta_r$, respectively. Then, using matrix notation and assuming omni-directional antennas, the received signals vector ${\bm{x}[k] = \left[ x_1[k], \ldots, x_N[k] \right]^\tran}$ can be represented as
	\begin{equation}
	\bm{x}[k] = \sum_{r=1}^R \bm{a}(p_r, q_r) s_r[k] + \bm{b}[k] = \bm{A} \bm{s}[k] + \bm{b}[k], \label{eq:recv_vec}
	\end{equation}
where $\bm{a}(p_r, q_r) = \left[ a_1(p_r, q_r), \ldots, a_N(p_r, q_r)\right]^\tran$ stands for the array steering vector, $\bm{s}[k] = \left[ s_1[k], \ldots, s_R[k]\right]^\tran$ the symbols vector, and $\bm{b}[k]=[b_1[k],\ldots, b_N[k]]^\tran$ the AWGN vector. Note that the model \eqref{eq:recv_vec} is valid only for a specific angular range where $g(\phi_r, \theta_r) = 1$. Now the array manifold matrix can be written as
\begin{equation} \label{eq:array_manifold}
\bm{A} = [ \bm{a}(p_1, q_1), \ldots, \bm{a}(p_R, q_R) ] \in \bmm{C}^{N \times R}.
\end{equation}
From our assumptions, it follows that the covariance matrix of the received signals is given by
\begin{equation}
\Rxx = \esp{\bm{x}[k]\bm{x}[k]^\hermit} = \bm{A} \Rss \bm{A}^\hermit + \Rbb,
\end{equation}
where $\Rss=\esp{\bm{s}[k]\bm{s}[k]^\hermit} = \sigma_s^2 \bm{I}_R$ and $\Rbb = \esp{\bm{b}[k] \bm{b}[k]^\hermit} = \sigma_b^2 \bm{I}_{N}$. The multi-antenna system employs a beamformer to recover a desired signal among the $R$ incoming signals. We define the signal-to-noise ratio (SNR) as the desired signal power over the AWGN variance, i.e. $\text{SNR} = \sigma_s^2 / \sigma_b^2$.

\begin{figure}[t]
	\centering
	\includegraphics[width=0.7\linewidth]{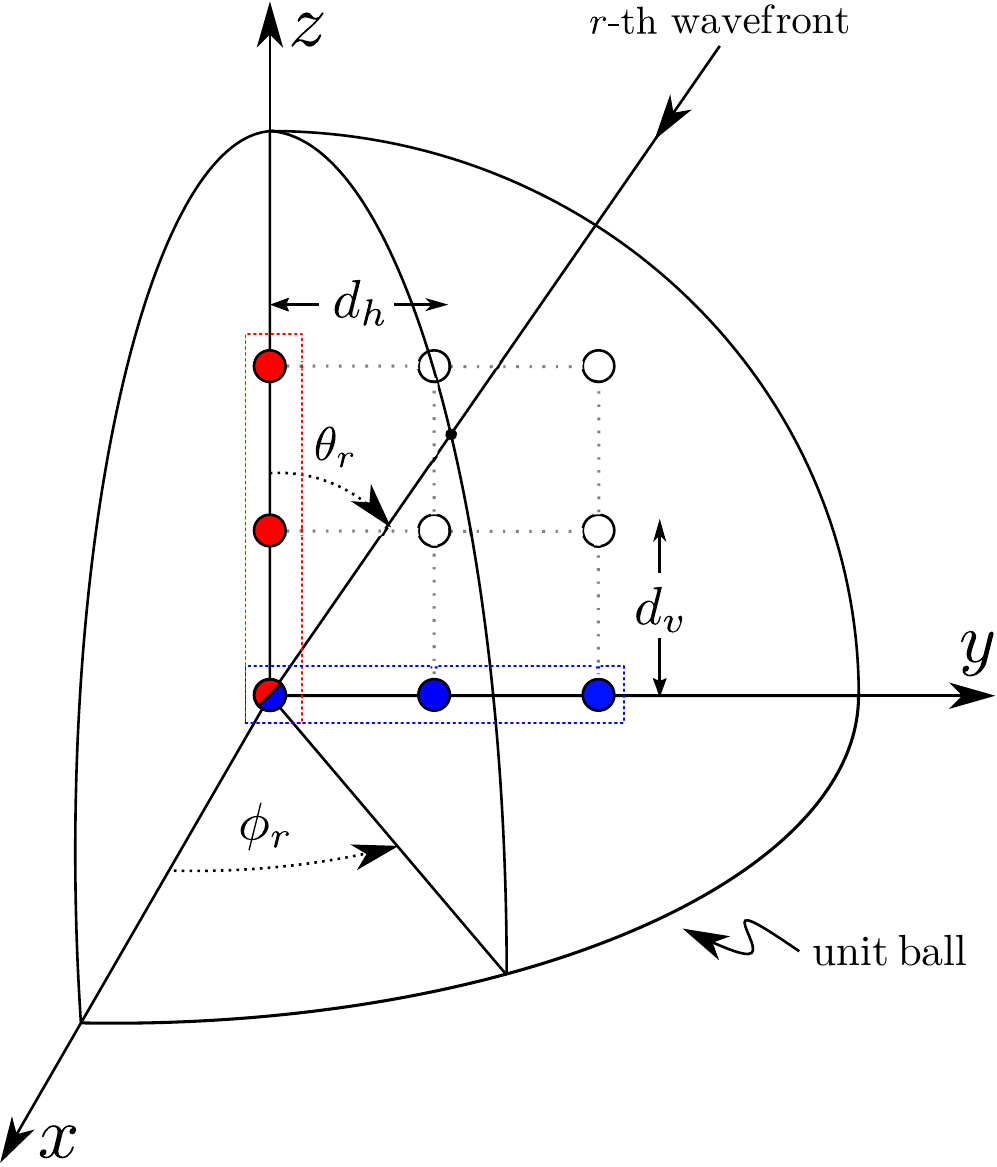}
	\caption{Uniform Rectangular Array (URA) in the $y$-$z$ plane.}
	\label{fig:array}
\end{figure}

The array response can be \emph{separated} into horizontal and vertical contributions owing to the URA bi-dimensionality~\cite{van_trees_optimum_2002}. More specifically, the array response with respect to any wavefront can be factorized as 
\begin{equation} \label{eq:sepan}
a_n(p_r, q_r) = a_{n_h}(p_r) a_{n_v}(q_r),
\end{equation}
where $a_{n_h}(p_r) = e^{j \pi (n_h-1) p_r}$ and $a_{n_v}(q_r) = e^{j \pi (n_v-1) q_r}$. The sub-array steering vectors are then defined as
\begin{align}
\bm{a}_h(p_r) &= \left[ a_1(p_r), \ldots, a_{N_h}(p_r) \right]^\tran,\\
\bm{a}_v(q_r) &= \left[ a_1(q_r), \ldots, a_{N_v}(q_r) \right]^\tran.
\end{align}
The horizontal and vertical sub-arrays of a URA are depicted in Figure~\ref{fig:array}. The separable representation in \eqref{eq:sepan} leads to the Kronecker factorization of the array steering vectors: 
\begin{equation}
\bm{a}(p_r, q_r) = \bm{a}_v(q_r) \otimes \bm{a}_h(p_r),
\end{equation}
and, consequently, the array manifold matrix \eqref{eq:array_manifold} can be written as a Khatri-Rao product
	\begin{align}
		\bm{A} &= \left[ \bm{a}_v(q_1) \otimes \bm{a}_h(p_1), \ldots, \bm{a}_v(q_R) \otimes \bm{a}_h(p_R) \right]= \bm{A}_v \diamond \bm{A}_h, \hspace{0.25cm} \label{eq:sep_manifold_matrix}
	\end{align}
where 
\begin{align}
\bm{A}_h &= \left[ \bm{a}_h(p_1),\ldots, \bm{a}_h(p_R)\right] \in \bmm{C}^{N_h \times R},\\
\bm{A}_v &= \left[ \bm{a}_v(q_1),\ldots,\bm{a}_v(q_R) \right] \in \bmm{C}^{N_v \times R},
\end{align}
stand for the vertical and horizontal sub-array manifold matrices, respectively. Equation~\eqref{eq:sep_manifold_matrix} emphasizes the separable structure of the URA and shall be exploited in beamforming design.

\section{Beamforming Methods} 
\label{sec:methods}

We are interested in spatially filtering the received signals $\bm{x}[k]$ to extract $s_d[k]$, the signal of $d$-th (desired) wavefront, while attenuating the interfering signals. To this end, we design the beamforming filter $\bm{w} \in \bmm{C}^{N}$ so that its output  $y[k]=\bm{w}^\hermit \bm{x}[k]$ approximates the desired signal.
We choose to optimise this filter to minimise the mean square error (MSE) function
\begin{align} \label{eq:msefunc}
J_\text{MSE}(\bm{w}) &= \bmm{E}\left[|s_d[k] - \bm{w}^\hermit \bm{x}[k]|^2 \right] \\
			         &= \sigma_s^2 - \pxs^\hermit \bm{w} - \bm{w}^\hermit \pxs + \bm{w}^\hermit \Rxx \bm{w},
\end{align}
where $\pxs = \bmm{E}\left[ \bm{x}[k] s_d^*[k] \right] = \bm{A} \Rss \bm{e}_{d} \in \bmm{C}^{N}$ denotes the cross-covariance vector, and $\bm{e}_{r} \in \bmm{C}^{R}$ the $r$-th canonical vector in the $R$-dimensional space. The MMSE beamformer yields the global minimum of $J_\textrm{MSE}(\bm{w})$ and is given by the Wiener filter $\bm{w}_{opt} = \Rxx^{-1} \pxs$~\cite{haykin_adaptive_1996}. For large array systems, the computation of this filter becomes impractical since it involves the inversion of a very large covariance matrix. Iterative algorithms, such as the gradient descent method, can be used to simplify the calculations, however each of their iterations can still be computationally expensive. 

To simplify the calculations of the MMSE beamforming filter, we impose the following Kronecker structure: $\bm{w} = \bm{w}_v \otimes \bm{w}_h$, $\bm{w}_m \in \bmm{C}^{N_m}$, $m \in \{v,h\}$. Such a representation is motivated by the computational reduction of the beamformer design, since only $(N_v + N_h)$ parameters need to be optimised, against $N_v N_h$ when separability is not considered. In order to gain more insight into the array separability, let us consider an example with $N$ antennas and $R=1$ impinging wavefront. The received signal in this case is given by $\bm{x}[k] = \bm{a}(p_d,q_d)s[k] + \bm{b}[k]$. The output signal for the filter $\bm{w}$ is then written as
\begin{equation}
	y[k] = \bm{w}^\hermit \bm{x}[k] = \text{AF} \cdot  s[k] + \bm{w}^\hermit \bm{b}[k],
\end{equation} 
where $\text{AF} = \bm{w}^\hermit \bm{a}(p_d,q_d)$ is the array factor. Note that it can be rewritten as
\begin{equation}
	\text{AF} = \left[ \bm{w}_v^\hermit \bm{a}_v(q_d) \right] \cdot \left[ \bm{w}_h^\hermit \bm{a}_h(p_d) \right]. \label{eq:af}
\end{equation}
Equation~\eqref{eq:af} shows that the total array factor is given by the product of the sub-array factors. Note that this property does not depend on the beam pattern of the antenna elements, since it only relies on the factorization of the array factor. The steering vectors of some array geometries, such as circular arrays, for example, do not permit a Kronecker factorization. In this case, we cannot directly apply the methods proposed in this work.

We present two novel beamforming strategies based on the MMSE filter that exploits array separability to reduce computational costs. In the first strategy, we recast the MSE function \eqref{eq:msefunc} using tensor algebra, and then we devise an iterative beamformer based on alternating minimisation. The reader is referred to \cite{kolda_tensor_2009,comon2009tensor} for an introduction to tensor algebra. In the second strategy, we obtain a closed-form beamforming filter by employing sub-array MMSE filtering. In the end, we discuss the computational complexity of the proposed methods.

\subsection{Tensor MMSE Beamformer}
\label{sec:iterative_analyitical}

Let us first reformulate the received signal model \eqref{eq:recv} using tensor algebra. Considering array separability \eqref{eq:sepan}, the received signal at the $n$-th antenna can be rewritten as
\begin{equation} \label{eq:mult_recv_sig}
	x_{n_h, n_v}[k] = \sum_{r=1}^{R} = a_{n_v}^{(v)}(q_r) a_{n_h}^{(h)}(p_r) s_r[k] + b_{n_h, n_v}[k],
\end{equation}
Now, define the received signals matrix $[\bm{X}[k]]_{n_h, n_v} = x_{n_h, n_v}[k]$, the array manifold tensor $[\mc{A}]_{n_h, n_v, r} =  a_{n_h}^{(h)}(p_r)a_{n_v}^{(v)}(q_r)$, and the AWGN matrix $[\bm{B}[k]]_{n_h, n_v} = b_{n_h, n_v}[k]$. Using tensor modal products \cite{kolda_tensor_2009}, the received signals matrix can be expressed as
\begin{equation} \label{eq:tensor_model}
	\bm{X}[k] = \mc{A} \times_3 \bm{s}[k]^\tran + \bm{B}[k] \in \bmm{C}^{N_h \times N_v}.
\end{equation}
The array manifold tensor $\mc{A}$ is a three-dimensional array with dimensions $N_h \times N_v \times R$. The two first array modes refer to the physical array dimensions, whereas the third one represents the transmitted signal dimension, i.e. the number of wavefronts. This tensor can be unfolded into matrices in three different manners \cite{kolda_tensor_2009}:
\begin{align}
	[\mc{A}]_{(1)} &= \left[ \bm{a}_h(p_1)\bm{a}_v(q_1)^\tran, \ldots, \bm{a}_h(p_R)\bm{a}_v(q_R)^\tran \right] \in \bmm{C}^{N_h \times N_v R},\\
	[\mc{A}]_{(2)} &= \left[ \bm{a}_v(q_1)\bm{a}_h(p_1)^\tran, \ldots, \bm{a}_v(q_R)\bm{a}_h(p_R)^\tran \right]\in \bmm{C}^{N_v \times N_h R},\\
	[\mc{A}]_{(3)} &= (\bm{A}_v \diamond \bm{A}_h)^\tran = \bm{A}^\tran \in \bmm{C}^{R \times N_v N_h}.
\end{align}

Let us now rewrite the beamformer output $y[k] = \bm{w}^\hermit \bm{x}[k]$ in terms of $\bm{w}_h$ and $\bm{w}_v$ by considering the Kronecker factorization of $\bm{w}$ and the bi-dimensional representation of the received signals \eqref{eq:mult_recv_sig}:
\begin{align}
y[k] &= \sum_{n = 1}^N [\bm{w}]_n^* x_n[k]\\
	 &= \sum_{n_h=1}^{N_h} \sum_{n_v=1}^{N_v} [\bm{w}_h]_{n_h}^* [\bm{w}_v]_{n_v}^*  x_{n_h, n_v}[k]. \label{eq:it_bf_out_nmode_a}
\end{align}
Using matrix notation, \eqref{eq:it_bf_out_nmode_a} can be rewritten as
\begin{equation}
y[k] = \bm{w}_h^\hermit \bm{X}[k] \bm{w}_v^* = \bm{w}_v^\hermit \bm{X}[k]^\tran \bm{w}_h^*. \label{eq:it_bf_out_nmode}
\end{equation}
The MSE function \eqref{eq:msefunc} can now be reformulated as the following bi-linear function
\begin{align}
	&J_\textrm{MSE}(\bm{w}_h, \bm{w}_v) = \label{eq:bi_msefunc}\\
	& \bmm{E}\left[ \left| s_d[k] - \bm{w}_h^\hermit \bm{X}[k] \bm{w}_v^*\right|^2 \right] =\bmm{E}\left[ \left| s_d[k] - \bm{w}_v^\hermit \bm{X}[k]^\tran \bm{w}_h^*\right|^2 \right].
\end{align}

Unfortunately, minimising \eqref{eq:bi_msefunc} is not straightforward. The gradient of $J_\textrm{MSE}(\bm{w}_h, \bm{w}_v)$ with respect to any of its vector variables depends on the other variable. This coupling disables the direct application of methods such as gradient descent, calling for alternating minimisation techniques. To this end, let us define the horizontal and vertical sub-array input signals
\begin{align}
	\bm{u}_h[k] &=  \bm{X}[k] \bm{w}_v^* \in \bmm{C}^{N_h}, \label{eq:it_subsig_h}\\
	\bm{u}_v[k] &=  \bm{X}[k]^\tran \bm{w}_h^* \in \bmm{C}^{N_v}.  \label{eq:it_subsig_v}
\end{align}
and rewrite \eqref{eq:bi_msefunc} as
\begin{align}
	J_\textrm{MSE}(\bm{w}_h, \bm{w}_v) &= \bmm{E}\left[ \left| s_d[k] - \bm{w}_h^\hermit \bm{u}_h[k]\right|^2 \right] \label{eq:it_out_lin1}\\
	&= \bmm{E}\left[ \left| s_d[k] - \bm{w}_v^\hermit \bm{u}_v[k]\right|^2 \right]. \label{eq:it_out_lin2}
\end{align}
It is easy to recognize \eqref{eq:it_out_lin1} and \eqref{eq:it_out_lin2} as linear functions of $\bm{w}_h$ and $\bm{w}_v$, respectively, when the other vector variable is fixed. The proposed beamforming method, referred to as Tensor MMSE (TMMSE), consists of sequentially minimising \eqref{eq:it_out_lin1} and \eqref{eq:it_out_lin2} using the MMSE solution for each sub-filter until a convergence criterion is satisfied. The sub-beamformers are calculated according to the following theorem:

\begin{theorem} \label{theorem:urra}
	The minimisers of \eqref{eq:it_out_lin1} and \eqref{eq:it_out_lin2} conditioned on $\bm{w}_v$ and $\bm{w}_h$ are respectively given by 
	\begin{align}
		\bm{w}_h &= \bm{R}_{hh}^{-1} \bm{p}_{hs}, \label{eq:mmse_h}\\
		\bm{w}_v &= \bm{R}_{vv}^{-1} \bm{p}_{vs},\label{eq:mmse_v}
	\end{align}
	where 
	\begin{align}
		\bm{R}_{hh} &= \bmm{E}\left[ \bm{u}_h[k] \bm{u}_h[k]^\hermit  \right] \\
							&= [\mc{A}]_{(1)} ( \Rss \otimes \bm{w}_v^*\bm{w}_v^\tran) [\mc{A}]_{(1)}^\hermit +\sigma_b^2 \|\bm{w}_v\|^2_2 \bm{I}_{N_h} \in \bmm{C}^{N_h \times N_h}, \label{eq:covh}\\
		\bm{R}_{vv}  &= \bmm{E}\left[ \bm{u}_v[k] \bm{u}_v[k]^\hermit  \right] \\
							&= [\mc{A}]_{(2)} ( \Rss \otimes \bm{w}_h^*\bm{w}_h^\tran) [\mc{A}]_{(2)}^\hermit + \sigma_b^2 \|\bm{w}_h\|^2_2 \bm{I}_{N_v} \in \bmm{C}^{N_v \times N_v} \label{eq:covv}
	\end{align}
	denote the covariance matrices of the sub-array input signals, and
	\begin{align}
		\bm{p}_{hs} &= \bmm{E}\left[ \bm{u}_h[k] s_d^*[k] \right] = [\mc{A}]_{(1)} ( \Rss\edes \otimes \bm{w}_v^*) \in \bmm{C}^{N_h}, \label{eq:crosscorrh}\\
		\bm{p}_{vs} &= \bmm{E}\left[ \bm{u}_v[k] s_d^*[k] \right] = [\mc{A}]_{(2)} ( \Rss\edes \otimes \bm{w}_h^*) \in \bmm{C}^{N_v} \label{eq:crosscorrv}
	\end{align}
	the cross-covariance vectors between the sub-array input signals and the signal of interest.	
\end{theorem}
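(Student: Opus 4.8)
The plan is to reduce each conditional problem to the classical Wiener filter and then carry out the algebraic identification of the covariance/cross-covariance expressions. First, fix $\bm{w}_v$. Then $\bm{u}_h[k] = \bm{X}[k]\bm{w}_v^*$ in \eqref{eq:it_subsig_h} is a deterministic linear function of the received data, so \eqref{eq:it_out_lin1} is exactly the problem of estimating $s_d[k]$ from the observation $\bm{u}_h[k]$ in the mean-square sense. The derivation that produced $\bm{w}_{opt} = \Rxx^{-1}\pxs$ from \eqref{eq:msefunc} therefore applies verbatim: expanding the square gives $J_\textrm{MSE}(\bm{w}_h) = \sigma_s^2 - \bm{p}_{hs}^\hermit\bm{w}_h - \bm{w}_h^\hermit\bm{p}_{hs} + \bm{w}_h^\hermit\bm{R}_{hh}\bm{w}_h$ with $\bm{R}_{hh} = \esp{\bm{u}_h[k]\bm{u}_h[k]^\hermit}$ and $\bm{p}_{hs} = \esp{\bm{u}_h[k]s_d^*[k]}$; zeroing the gradient yields \eqref{eq:mmse_h}, and since (as computed below) $\bm{R}_{hh}$ contains the strictly positive definite term $\sigma_b^2\|\bm{w}_v\|_2^2\bm{I}_{N_h}$ whenever $\bm{w}_v\neq\bm{0}$, this stationary point is the unique global minimiser. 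The claim for $\bm{w}_v$ is obtained identically from \eqref{eq:it_out_lin2} with $\bm{u}_v[k] = \bm{X}[k]^\tran\bm{w}_h^*$.

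Next I would compute $\bm{R}_{hh}$ and $\bm{p}_{hs}$ explicitly. Writing \eqref{eq:mult_recv_sig} in matrix form, $\bm{X}[k] = \bm{A}_h\,\mathrm{diag}(\bm{s}[k])\,\bm{A}_v^\tran + \bm{B}[k]$, and using $\mathrm{diag}(\bm{s}[k])\bm{c} = \mathrm{diag}(\bm{c})\bm{s}[k]$, one gets $\bm{u}_h[k] = \bm{A}_h\,\mathrm{diag}(\bm{A}_v^\tran\bm{w}_v^*)\,\bm{s}[k] + \bm{B}[k]\bm{w}_v^*$. Since $\bm{s}[k]$ and $\bm{b}[k]$ are zero-mean and uncorrelated, the expectations split into a signal part and a noise part. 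For the noise part, whiteness of $\bm{B}[k]$ gives $\esp{\bm{B}[k]\bm{w}_v^*(\bm{B}[k]\bm{w}_v^*)^\hermit} = \sigma_b^2\|\bm{w}_v\|_2^2\bm{I}_{N_h}$ and $\esp{\bm{B}[k]\bm{w}_v^*\,s_d^*[k]} = \bm{0}$; for the signal part one inserts $\esp{\bm{s}[k]\bm{s}[k]^\hermit} = \Rss$ and $\esp{\bm{s}[k]s_d^*[k]} = \Rss\edes$, leaving $\bm{A}_h\,\mathrm{diag}(\bm{A}_v^\tran\bm{w}_v^*)\,\Rss\,\mathrm{diag}(\bm{A}_v^\tran\bm{w}_v^*)^\hermit\bm{A}_h^\hermit$ and $\bm{A}_h\,\mathrm{diag}(\bm{A}_v^\tran\bm{w}_v^*)\,\Rss\edes$.

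The final step is to recognise these two quantities as $[\mc{A}]_{(1)}(\Rss\otimes\bm{w}_v^*\bm{w}_v^\tran)[\mc{A}]_{(1)}^\hermit$ and $[\mc{A}]_{(1)}(\Rss\edes\otimes\bm{w}_v^*)$. This is a Kronecker bookkeeping exercise: $\Rss = \sigma_s^2\bm{I}_R$ makes $\Rss\otimes\bm{w}_v^*\bm{w}_v^\tran$ block-diagonal with $R$ blocks $\sigma_s^2\bm{w}_v^*\bm{w}_v^\tran$, and using $[\mc{A}]_{(1)} = [\bm{a}_h(p_1)\bm{a}_v(q_1)^\tran,\ldots,\bm{a}_h(p_R)\bm{a}_v(q_R)^\tran]$ one checks block by block that both sides collapse to $\sigma_s^2\sum_{r}|\bm{a}_v(q_r)^\tran\bm{w}_v^*|^2\,\bm{a}_h(p_r)\bm{a}_h(p_r)^\hermit$ and $\sigma_s^2(\bm{a}_v(q_d)^\tran\bm{w}_v^*)\,\bm{a}_h(p_d)$, respectively. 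The vertical identities \eqref{eq:covv} and \eqref{eq:crosscorrv} follow by the same argument with $[\mc{A}]_{(2)}$, $\bm{A}_v$, $\bm{w}_h$ replacing $[\mc{A}]_{(1)}$, $\bm{A}_h$, $\bm{w}_v$. I expect this last identification --- keeping the conjugations straight and matching the Khatri-Rao/Kronecker column indexing to the unfolding --- to be the only delicate point; everything else is the standard Wiener-filter computation together with elementary properties of white noise.
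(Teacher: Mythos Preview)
Your argument is correct. The reduction to the Wiener filter, the splitting into signal and noise parts, and the final identification are all valid; your remark that $\sigma_b^2\|\bm{w}_v\|_2^2\bm{I}_{N_h}$ forces $\bm{R}_{hh}\succ 0$ (hence a unique minimiser) is a nice addition the paper leaves implicit.

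The difference from the paper is purely in the algebraic packaging of $\bm{u}_h[k]$. You write $\bm{X}[k]=\bm{A}_h\,\mathrm{diag}(\bm{s}[k])\,\bm{A}_v^\tran+\bm{B}[k]$, compute the second-order statistics in the ``$\bm{A}_h\,\mathrm{diag}(\cdot)\,\bm{A}_h^\hermit$'' form, and then perform a block-by-block check to match the stated unfolding/Kronecker expressions. The paper instead starts from the mode-$1$ unfolding $\bm{X}[k]=[\mc{A}]_{(1)}(\bm{s}[k]\otimes\bm{I}_{N_v})+\bm{B}[k]$ and uses the mixed-product property once to obtain $\bm{u}_h[k]=[\mc{A}]_{(1)}(\bm{s}[k]\otimes\bm{w}_v^*)+\bm{B}[k]\bm{w}_v^*$; from that representation \eqref{eq:covh} and \eqref{eq:crosscorrh} drop out directly after taking expectations, with no separate matching step. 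Your route is slightly more elementary and makes the rank-one-per-source structure explicit, at the cost of the extra bookkeeping you flag; the paper's route avoids that bookkeeping entirely and also yields the formulas for general $\Rss$ without appealing to $\Rss=\sigma_s^2\bm{I}_R$. For the noise term both proofs coincide: the paper carries out exactly the element-wise computation you summarise in one line.
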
 
\begin{proof} \renewcommand{\qedsymbol}{}
	See the appendix.
\end{proof}

Theorem~\ref{theorem:urra} can be applied when the signals' statistics ($\Rss$ and $\Rbb$), and the array manifold matrix are known. However, such information might not be available in practice, and thus the sub-array covariance matrices and cross-covariance vectors need to be estimated. It can be done by  using sample estimates over $K$ time snapshots. In this sense,
the covariance matrices $\bm{R}_{hh}$ and $\bm{R}_{vv}$ can be estimated as
\begin{align}
\hat{\bm{R}}_{hh} &= \frac{1}{K} \sum_{k=0}^{K-1} \bm{u}_h[k] \bm{u}_h[k]^\hermit, \label{eq:covh_sample}\\
\hat{\bm{R}}_{vv} &= \frac{1}{K} \sum_{k=0}^{K-1} \bm{u}_v[k] \bm{u}_v[k]^\hermit, \label{eq:crosscorrh_sample}
\end{align}
and the cross-covariance vectors $\bm{p}_{hs}$ and $\bm{p}_{vs}$ as
\begin{align}
\hat{\bm{p}}_{hs} &= \frac{1}{K} \sum_{k=0}^{K-1} \bm{u}_h[k] s_d^*[k], \label{eq:covv_sample}\\
\hat{\bm{p}}_{vs} &=  \frac{1}{K} \sum_{k=0}^{K-1} \bm{u}_v[k] s_d^*[k].  \label{eq:crosscorrv_sample}
\end{align}
Note that $\bm{u}_h[k]$ and $\bm{u}_v[k]$ can be easily formed by observing $\bm{x}[k]$, reshaping into $\bm{X}[k]$, and using Equations \eqref{eq:it_subsig_h} and \eqref{eq:it_subsig_v}, respectively. The steps to compute the TMMSE beamformer are summarized in Algorithm~\ref{alg:tmmse_alg}.

\begin{algorithm}[t]
	\caption{Tensor MMSE algorithm}
	\label{alg:tmmse_alg}
	\begin{algorithmic}[1]
		\State Randomly initialize $\bm{w}_h$ and $\bm{w}_v$
		\Repeat
		\State Form $\bm{R}_{hh}$ and $\bm{p}_{hs}$
		\State $\bm{w}_h \gets \bm{R}_{hh}^{-1} \bm{p}_{hs}$
		\State Form $\bm{R}_{vv}$  and $\bm{p}_{vs}$
		\State $\bm{w}_v \gets \bm{R}_{vv}^{-1} \bm{p}_{vs}$
		\Until{convergence criterion triggers}
		\State $\bm{w} \gets \bm{w}_v \otimes \bm{w}_h$
	\end{algorithmic}
\end{algorithm}

\subsection{Kronecker MMSE Beamformer} \label{sec:kmmse}

Let us consider the following Khatri-Rao product property. Let $\bm{A} \in \bmm{C}^{P \times M}$, $\bm{B} \in \bmm{C}^{Q\times N}$, $\bm{C} \in \bmm{C}^{M \times R}$, and $\bm{D} \in \bmm{C}^{N \times R}$. From \cite{liu_hadamard_2008}, it follows that
\begin{equation} \label{eq:krkron}
	(\bm{A} \otimes \bm{B}) (\bm{C} \diamond \bm{D}) = (\bm{AC}) \diamond (\bm{BD}) \in \bmm{C}^{PQ \times R}.
\end{equation}
This result suggests that a Kronecker separable beamformer can be individually applied to the corresponding sub-array manifold matrix in \eqref{eq:sep_manifold_matrix}. In this case, the filtering operation $y[k] = \bm{w}^\hermit \bm{x}[k]$ can be carried out as
\begin{align}
y[k] &= (\bm{w}_v \otimes \bm{w}_h)^\hermit (\bm{A}_v \diamond \bm{A}_h)\bm{s}[k] + \bm{w}^\hermit \bm{b}[k]\\
	    &= \left[(\bm{w}_v^\hermit \bm{A}_v) \diamond (\bm{w}_h^\hermit \bm{A}_h)\right]\bm{s}[k] + \bm{w}^\hermit \bm{b}[k].
\end{align}
Therefore, instead of optimising an $N$-dimensional beamformer for $\bm{A}$, we can design two independent low-dimensional beamformers for $\bm{A}_h$ and $\bm{A}_v$ individually. According to this approach, each sub-beamformer is fed only with signals from the corresponding antenna sub-array. In this sense, we define the horizontal and vertical observed signals:
\begin{align}
\bm{x}_h[k] &= \bm{A}_h \bm{s}[k] + \bm{b}_h[k] \in \bmm{C}^{N_h} \\
\bm{x}_v[k] &= \bm{A}_v \bm{s}[k] + \bm{b}_v[k] \in \bmm{C}^{N_v},
\end{align}
where $\bm{b}_h[k] \in \bmm{C}^{N_h}$ and $\bm{b}_v[k] \in \bmm{C}^{N_v}$ represent the additive Gaussian noise vector observed at the horizontal and vertical sub-arrays, respectively. These vectors are defined as
\begin{align}
[\bm{b}_h[k]]_{n_h} &= \left. b_{n_h + (n_v-1)N_h} [k]\right|_{n_v=1},\\
[\bm{b}_v[k]]_{n_v} &= \left. b_{n_h + (n_v-1)N_h} [k]\right|_{n_h=1}.
\end{align}

We propose to optimise each sub-beamformer according to the MMSE criterion. However, the direct application of the MMSE filter to each sub-beamformer would be prone to numerical problems. Often in many practical scenarios, e.g. mobile communications, different signals are closely separated in an angular domain (azimuth or elevation). In this case, either the vertical or horizontal sub-array manifold matrices become almost rank deficient, turning the MSE minimisation problem ill-posed. To overcome this issue, we resort to Tikhonov regularization~\cite{palomar2010convex}, which avoids singular covariance matrices by penalizing large-norm solutions. The proposed beamforming method, hereafter referred to as Kronecker MMSE (KMMSE), independently minimises the following cost functions
\begin{align}
	J_\text{MSE}^{(h)}(\bm{w}_h, \rho) &= \bmm{E}\left[|s_d[k] - \bm{w}_h^\hermit \bm{x}_h[k]|^2 \right] + \rho \| \bm{w}_h \|_2^2, \label{eq:km1}\\
	J_\text{MSE}^{(v)}(\bm{w}_v, \rho) &= \bmm{E}\left[|s_d[k] - \bm{w}_v^\hermit \bm{x}_v[k]|^2 \right] + \rho \| \bm{w}_v \|_2^2, \label{eq:km2}
\end{align} 
where $\rho \geq 0$ denotes the regularization parameter. Define
\begin{align}
\bm{R}_m &= \bm{A}_m \Rss \bm{A}_m^\hermit + \bm{R}_{bb,m}, \label{eq:subcov}\\
\bm{p}_m &= \bm{A}_m \Rss \edes, \label{eq:subcrossc}
\end{align}
with $\bm{R}_{bb,m} = \sigma_b^2 \bm{I}_{N_m}$ for $m \in \{h,v\}$. The minimisers for \eqref{eq:km1} and \eqref{eq:km2} are thus given by $\bm{w}_m = (\bm{R}_m + \rho \bm{I}_{N_m})^{-1}\bm{p}_m$ for $m \in \{h,v\}$. Due to regularization, the KMMSE output signal is not guaranteed to have the same power as the desired signal. Thus, we employ the following scaling to correct the KMMSE output power: $y_{\text{KMMSE}}[k] = (\sigma_s/ \sigma_p) p[k] $, where $p[k] = (\bm{w}_v \otimes \bm{w}_h)^\hermit \bm{x}[k]$ and $\sigma_p$ denotes the standard deviation of $p[k]$. In a practical implementation, this scaling correction can be performed by the automatic gain control circuit. The computation of the KMMSE filter is summarized in Algorithm~\ref{alg:kmmse}.

In practice, one might not have \emph{a priori} knowledge of the sub-array manifold matrices ($\bm{A}_h$ and $\bm{A}_v$) and signals' statistics. One can estimate \eqref{eq:subcov} and \eqref{eq:subcrossc} using the received signals from the horizontal and vertical sub-arrays, represented by
\begin{align}
\bar{\bm{x}}_m[k] &= \bm{A}_m \bm{s}[k] + \bm{b}_m[k],\,m \in \{h, v\}. \label{eq:xh}
\end{align}
For the horizontal sub-array, we define 
\[[\bar{\bm{x}}_h[k]]_{n_h} = \left. x_{n_h + (n_v-1)N_h} [k]\right|_{n_v=1} = x_{n_h}[k],\] 
with $n_h \in \{1,\ldots,N_h\}$ and $r \in \{1,\ldots,R\}$. Similarly, for the vertical sub-array: 
\[[\bar{\bm{x}}_v[k]]_{n_v} = \left. x_{{n_h + (n_v-1)N_h}} [k]\right|_{n_h=1} = x_{1+(n_v-1)N_h}[k],\] 
with $n_v \in \{1,\ldots, N_v\}$ and $r \in \{1,\ldots,R\}$. Now, the covariance matrices can be estimated as
\begin{align}
\hat{\bm{R}}_h &=\left( \frac{1}{K} \sum_{k=0}^{K-1} \bar{\bm{x}}_h[k] \bar{\bm{x}}_h[k]^\hermit \right),\\
\hat{\bm{R}}_v &= \left( \frac{1}{K} \sum_{k=0}^{K-1} \bar{\bm{x}}_v[k] \bar{\bm{x}}_v[k]^\hermit \right),
\end{align}
and the cross-covariance vectors as
\begin{align}
\hat{\bm{p}}_{h} &= \left( \frac{1}{K} \sum_{k=0}^{K-1} \bar{\bm{x}}_h[k] s_d^*[k] \right),\\
\hat{\bm{p}}_{v} &= \left( \frac{1}{K} \sum_{k=0}^{K-1} \bar{\bm{x}}_v[k] s_d^*[k] \right).
\end{align}

The proposed closed-form KMMSE beamformer can be seen as a sub-optimal solution which relies on a covariance matrix approximation. According to the mixed product property of the Kronecker product~\cite{liu_hadamard_2008}, the KMMSE beamformer can be expressed as
\begin{equation} \label{eq:krfilt}
\bm{w} = \left[ (\bm{R}_v + \rho \bm{I}_{N_v}) \otimes (\bm{R}_h + \rho \bm{I}_{N_h}) \right]^{-1} (\bm{p}_{v} \otimes \bm{p}_{h}).
\end{equation}
The Kronecker product of covariance matrices in \eqref{eq:krfilt} can be regarded as an approximation of $\Rxx$. Also, it is straightforward to see in \eqref{eq:krfilt} that the cross-covariance vector $\bm{p}_{xs}$ can be exactly factorized into $\bm{p}_{v} \otimes \bm{p}_{h}$. We now conduct an asymptotic analysis of KMMSE to provide insights on its performance.

First, consider the classical MMSE filter
\begin{equation}
\bm{w}_{opt} = \Rxx^{-1} \pxs = \left( \bm{A} \Rss \bm{A}^\hermit + \Rbb \right)^{-1} \bm{A} \Rss \edes.
\end{equation}
Applying the matrix inversion lemma \cite{petersen_matrix_2012}, its Hermitian vector can be written as
\begin{equation}
\bm{w}_{opt}^\hermit = \edes^\tran \left( \Rss^{-1} + \bm{A}^\hermit \Rbb^{-1} \bm{A} \right)^{-1} \bm{A}^\hermit \Rbb^{-1}.
\end{equation}
From the signal statistics assumptions in Section~\ref{sec:signal}, we have
\begin{equation} \label{eq:mmseinv}
\bm{w}_{opt}^\hermit = \edes^\tran \left( \frac{\sigma_b^2}{\sigma_s^2} \bm{I}_R + \bm{A}^\hermit \bm{A} \right)^{-1} \bm{A}^\hermit.
\end{equation}
Now we rewrite the Kronecker factors of the KMMSE filter using \eqref{eq:mmseinv} and for $\rho=0$ to obtain
\begin{align}
&\bm{w}^\hermit = \left[ \edes^\tran \left( \frac{\sigma_b^2}{\sigma_s^2} \bm{I}_R + \bm{A}_v^\hermit \bm{A}_v \right)^{-1} \bm{A}_v^\hermit \right] \otimes \\
&\left[ \edes^\tran \left( \frac{\sigma_b^2}{\sigma_s^2} \bm{I}_R + \bm{A}_h^\hermit \bm{A}_h \right)^{-1} \bm{A}_h^\hermit\right]. \label{eq:kmmse_asyn}
\end{align}

At high SNR, the noise power drops and $\sigma_b^2 \rightarrow 0$. If the inverse matrix $(\bm{A}_m^\hermit \bm{A}_m)^{-1}$ exists for $m \in \{h,v\}$, then 
\begin{equation} \label{eq:wkr_high}
	\bm{w}^\hermit \rightarrow \left( \edes^\tran \bm{A}_v^\dagger  \right) \otimes  \left( \edes^\tran \bm{A}_h^\dagger  \right).
\end{equation}
As expected, each sub-array beamformer converges to a ZF filter. Using \eqref{eq:krkron} and \eqref{eq:wkr_high}, we see that the KMMSE output signal at high SNR converges to
\begin{equation}
y[k] \rightarrow \left[ \left( \edes^\tran \bm{A}_v^\dagger \bm{A}_v \right) \diamond \left( \edes^\tran \bm{A}_h^\dagger \bm{A}_h \right) \right] \bm{s}[k] = s_d[k].
\end{equation}
The inverse $(\bm{A}_m^\hermit \bm{A}_m)^{-1}$ exists if and only if $\bm{A}_m^\hermit \bm{A}_m$ is not rank deficient, i.e., the wavefronts arrive from different directions. However, when the wavefronts are closely spaced in the angular domain, $\bm{A}_m^\hermit \bm{A}_m$ becomes ill-conditioned and the ZF filter performs poorly. Fortunately, when $\rho >0$, the inverse matrix is defined, allowing for desired signal recovery.

At low SNR the term $\frac{\sigma_b^2}{\sigma_s^2} \bm{I}_R $ dominates and \eqref{eq:kmmse_asyn} goes to
\begin{align}
	\bm{w}^\hermit &= \left( \edes^\tran \frac{\sigma_s^2}{\sigma_b^2}  \bm{A}_v^\hermit\right) \otimes \left( \edes^\tran \frac{\sigma_s^2}{\sigma_b^2}  \bm{A}_h^\hermit\right)\\
						 &= \frac{\sigma_s^4}{\sigma_b^4} \bm{a}_v(\phi_d,\theta_d)^\hermit \otimes \bm{a}_h(\phi_d,\theta_d)^\hermit. \label{eq:wkr_low}
\end{align}
Equation~\eqref{eq:wkr_low} shows that, as in the classical MMSE filter, the factors of $\bm{w}$ converge to matched filters which maximize the desired signal power. In this case, the KMMSE output signal can be written as ${y[k] \rightarrow \frac{\sigma_s^4}{\sigma_b^4} [ ( \bm{a}_v(\phi_d,\theta_d)^\hermit \bm{A}_v ) \diamond ( \bm{a}_h(\phi_d,\theta_d)^\hermit\bm{A}_h ) ] \bm{s}[k] + \bm{w}^\hermit\bm{b}[k]}$. If the incoming wavefronts are sufficiently separated in the angular domain, i.e., if all $\bm{a}_v$ and all $\bm{a}_h$ are mutually orthogonal, then ${y[k] \rightarrow \frac{\sigma_s^4}{\sigma_b^4} s_d[k] + \bm{w}^\hermit\bm{b}[k]}$. The analysis above shows that the proposed KMMSE filter is able to recover the desired signal from the received signals despite the covariance matrix approximations. Note that it is based on the assumption that the incoming signals are sufficiently separated in the angular domain. 

\begin{algorithm}[t]
	\caption{Kronecker MMSE filter}
	\label{alg:kmmse}
	\begin{algorithmic}[1]
		\State Select $\rho \geq 0$ 
		\State Form $\bm{R}_h$ and $\bm{p}_h$
		\State $\bm{w}_h \gets (\bm{R}_h + \rho \bm{I}_{N_h})^{-1} \bm{p}_h$
		\State Form $\bm{R}_v$ and $\bm{p}_v$
		\State $\bm{w}_v \gets (\bm{R}_v + \rho \bm{I}_{N_v})^{-1} \bm{p}_v$
		\State $\bm{w} \gets \bm{w}_v \otimes \bm{w}_h$
	\end{algorithmic}
\end{algorithm}

The proposed beamforming methods work with low-dimensional sub-array manifold matrices to decrease their computational complexity. However, this also reduces their degrees of freedom, which are important for attenuating interfering signals. An MMSE filter designed for $N$ antennas has $N$ degrees of freedom, i.e., it is capable of recovering the desired signal and rejecting $N-1$ undesired sources. Our separable beamforming framework, by contrast, offers $N_h$ and $N_v$ degrees of freedom for the horizontal and vertical sub-arrays. Therefore, the separable filter performance is limited by the least degree of freedom. Hence, the proposed methods are capable of recovering the desired signal and rejecting $\min(N_h, N_v)-1$ undesired sources. The proposed separable beamforming framework exchanges degrees of freedom for computational complexity reduction.

\subsection{Computational Complexity} \label{sec:comp}

The MMSE filter is known to be computationally complex. However, if the array manifold matrix $\bm{A}$ and the signal statistics $\Rss$ and $\Rbb$ are known, one can employ the matrix inversion lemma to the MMSE filter and obtain the low-complexity MMSE expression \eqref{eq:mmseinv}, in which a $R\times R$ matrix is inverted. However, this information may not be available, and then sample estimates are needed to compute the MMSE filter. In this case, the inversion lemma cannot be applied, and, thus an $N\times N$ covariance matrix is inverted in order to get the MMSE filter coefficients. Such an operation has complexity $O(N^3)$, which can be overwhelming for massive array systems. In this case, the proposed methods can be used since they are much less expensive in computational terms, as we show in the following.

The TMMSE filter calculates its beamformer coefficients through an iterative process of $I$ iterations, in which $N_h$- and $N_v$-dimensional matrices are inverted. Therefore, the TMMSE filter requires $O(I(N_h^3 + N_v^3))$ operations. Therefore, this method is less complex than the classical approach provided that $I$, $N_h$ and $N_v$ are not too large. The authors in \cite{yener_interference_2001} discussed the convergence of alternating MMSE-based methods and concluded that they are monotonically convergent. Other numerical properties such as convergence rate and stability are not discussed and, to the best of our knowledge, the investigation of these aspects remains a research challenge. The analytical convergence study of the proposed method is beyond the scope of this work.

The KMMSE filter is much simpler than the previous methods since it performs sub-array beamforming using closed-form solutions. To obtain the beamformer coefficients, one needs to invert $N_h$- and $N_v$-dimensional matrices only once. Thus, this method carries out $O(N_h^3 + N_v^3)$ operations.

\section{Simulation Results} \label{sec:simulations}

We present numerical results from simulations conducted to assess the performance of the proposed methods. At each simulation, signal data is generated as follows: $R$ independent sequences of $K$ QPSK-modulated symbols are generated to form $\bm{s}[k]$, for all $k \in \{1,\ldots,K\}$. Next, the direction cosines ($p_r$ and $q_r$) of the $R$ wavefronts are randomly generated according to a uniform distribution in the range $[-0.9,\,0.9]$ so the array manifold matrix $\bm{A}$ is formed. Note that selecting the direction cosines within this range ensures the omnidirectional propagation assumption of Section~\ref{sec:signal}. Finally, the observed signals \eqref{eq:recv_vec} are formed by contaminating the received symbols with additive noise.

We investigate the computational complexity and source recovery performance of the proposed beamforming methods in terms of floating point operations (flops) and uncoded bit error ratio (BER) of the desired signal. We choose BER as figure of merit because it reveals the noise and interference rejection performance. Therefore, if the beamforming operation is correctly carried out, then the interfering wavefronts are attenuated, and the desired signal BER decreases. The graphs in Figures~\ref{fig:ber_lambda}, \ref{fig:cond_lambda} and \ref{fig:ber_random} were obtained by averaging the results from $10^5$ independent experiments considering $R=4$ wavefronts, $N_h \times N_v = 8 \times 8$ antennas, and $K=1000$ symbols. Figures \ref{fig:bp_mmse}--\ref{fig:bp_kmmse_q}, however, were collected from a single experiment with $R=6$ wavefronts, $N_h \times N_v = 4 \times 4$ antennas, and $K=1000$ symbols. The parameter selection for the latter figures will be motivated in the following paragraphs. The convergence of the TMMSE method is achieved when the normalised filter residual between consecutive iterations is smaller than a tolerance value $\epsilon >0$, i.e. $\|\frac{\bm{w}_i}{\|\bm{w}_i\|_2} - \frac{\bm{w}_{i-1}}{\|\bm{w}_{i-1}\|_2}\|_2 < \epsilon$, where $i$ denotes the iteration number. In all experiments, we set $\epsilon=10^{-3}$. Preliminary simulations have shown that the average number of TMMSE iterations is~$5$.

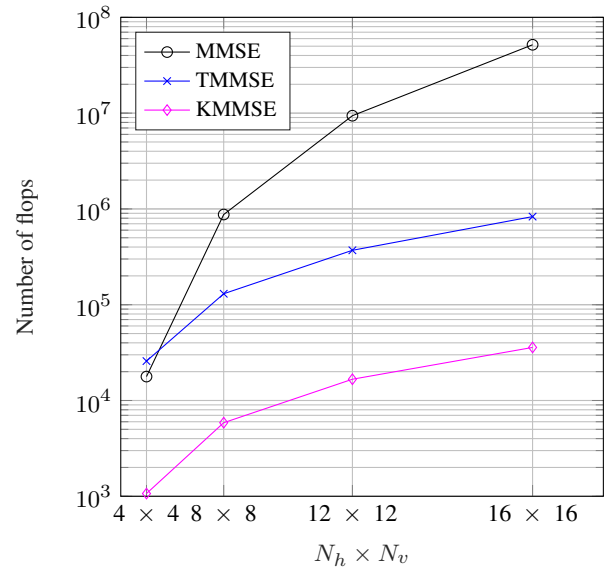
\begin{figure}[tb]
	\centering
%
%
\definecolor{mycolor1}{rgb}{1.00000,0.00000,1.00000}%
\begin{tikzpicture}

\begin{axis}[%
width=2.5in,
height=2.5in,
scale only axis,
xmin=0,
xmax=300,
xtick={16,64,144,256},
xticklabels={{$\text{4 }\times\text{ 4}$},{$\text{8 }\times\text{ 8}$},{$\text{12 }\times\text{ 12}$},{$\text{16 }\times\text{ 16}$}},
xlabel style={font=\color{white!15!black}},
xlabel={$N_h \times N_v$},
ymode=log,
ymin=1000,
ymax=100000000,
yminorticks=true,
ylabel style={font=\color{white!15!black}},
ylabel={Number of flops},
axis background/.style={fill=white},
xmajorgrids,
ymajorgrids,
yminorgrids,
legend style={font=\footnotesize,legend cell align=left, align=left, draw=white!15!black},
legend pos=north west
]
\addplot [color=black, mark=o, mark options={solid, black}]
  table[row sep=crcr]{%
16	17776\\
64	874456\\
144	9403696\\
256	51740536\\
};
\addlegendentry{MMSE}

\addplot [color=blue, mark=x, mark options={solid, blue}]
  table[row sep=crcr]{%
16	25789.932\\
64	130182.762\\
144	369978.884\\
256	831498.59\\
};
\addlegendentry{TMMSE}

\addplot [color=mycolor1, mark=diamond, mark options={solid, mycolor1}]
  table[row sep=crcr]{%
16	1068\\
64	5864\\
144	16676\\
256	35808\\
};
\addlegendentry{KMMSE}

\end{axis}
\end{tikzpicture}%
	\caption{Number of flops as function of array size for $R=4$ wavefronts.}	
	\label{fig:flops}
\end{figure}
~
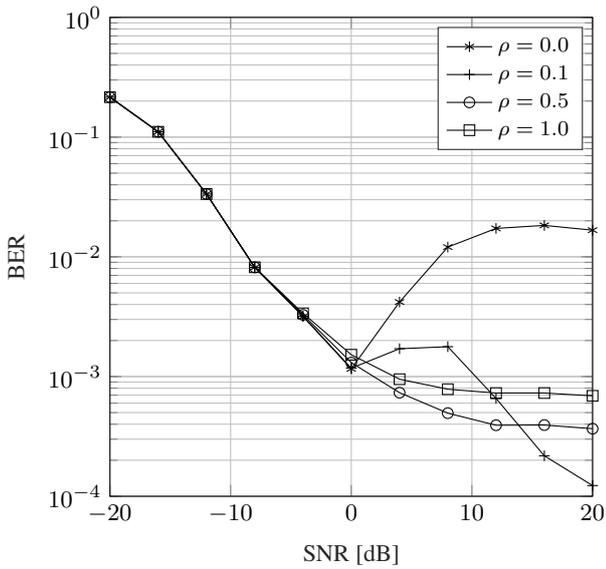
\begin{figure}[tb]
	\centering
%
%
\begin{tikzpicture}

\begin{axis}[%
width=2.5in,
height=2.5in,
scale only axis,
xmin=-20,
xmax=20,
xlabel style={font=\color{white!15!black}},
xlabel={SNR [dB]},
ymode=log,
ymin=0.0001,
ymax=1,
yminorticks=true,
ylabel style={font=\color{white!15!black}},
ylabel={BER},
axis background/.style={fill=white},
xmajorgrids,
ymajorgrids,
yminorgrids,
legend style={font=\footnotesize,legend cell align=left, align=left, draw=white!15!black}
]
\addplot [color=black, mark=asterisk, mark options={solid, black}]
  table[row sep=crcr]{%
-20	0.21543905\\
-16	0.11073812\\
-12	0.033475125\\
-8	0.008152135\\
-4	0.00317467\\
0	0.00115964\\
4	0.004192445\\
8	0.012062825\\
12	0.01727139\\
16	0.018304355\\
20	0.016702705\\
};
\addlegendentry{$\rho=0.0$}

\addplot [color=black, mark=+, mark options={solid, black}]
  table[row sep=crcr]{%
-20	0.21543905\\
-16	0.11073813\\
-12	0.03347524\\
-8	0.00815526\\
-4	0.003195545\\
0	0.00117454\\
4	0.00170837\\
8	0.001775885\\
12	0.000657085\\
16	0.000217925\\
20	0.00012289\\
};
\addlegendentry{$\rho=0.1$}

\addplot [color=black, mark=o, mark options={solid, black}]
  table[row sep=crcr]{%
-20	0.215439045\\
-16	0.11073812\\
-12	0.03347585\\
-8	0.00816771\\
-4	0.00327753\\
0	0.001308185\\
4	0.000733335\\
8	0.00049479\\
12	0.000392185\\
16	0.000393575\\
20	0.000366925\\
};
\addlegendentry{$\rho=0.5$}

\addplot [color=black, mark=square, mark options={solid, black}]
  table[row sep=crcr]{%
-20	0.21543905\\
-16	0.11073814\\
-12	0.03347655\\
-8	0.00818325\\
-4	0.003376125\\
0	0.00152173\\
4	0.000950555\\
8	0.000784295\\
12	0.00072845\\
16	0.00072887\\
20	0.00069047\\
};
\addlegendentry{$\rho=1.0$}

\end{axis}
\end{tikzpicture}%
	\caption{KMMSE BER performance for different regularization parameter $\rho$. $N_h \times N_v = 8 \times 8$, $R=4$ wavefronts.}
	\label{fig:ber_lambda}	
\end{figure}
~
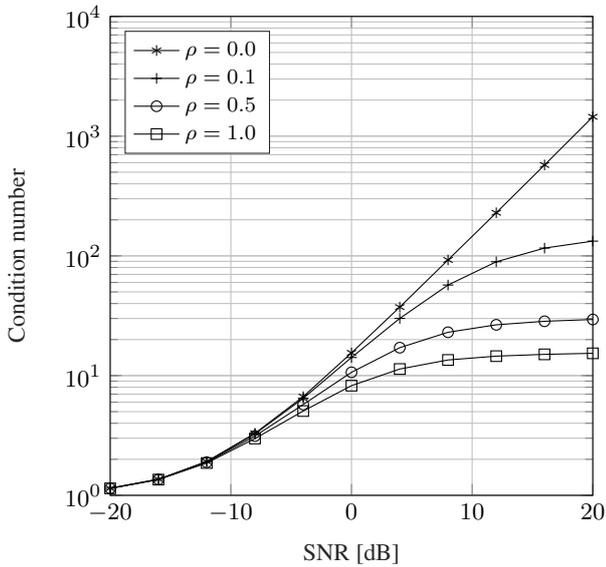
\begin{figure}[tb]
	\centering
%
%
\begin{tikzpicture}

\begin{axis}[%
width=2.5in,
height=2.5in,
scale only axis,
xmin=-20,
xmax=20,
xlabel style={font=\color{white!15!black}},
xlabel={SNR [dB]},
ymode=log,
ymin=1,
ymax=10000,
yminorticks=true,
ylabel style={font=\color{white!15!black}},
ylabel={Condition number},
axis background/.style={fill=white},
xmajorgrids,
ymajorgrids,
yminorgrids,
legend pos = north west,
legend style={font=\footnotesize, legend cell align=left, align=left, draw=white!15!black}
]
\addplot [color=black, mark=asterisk, mark options={solid, black}]
  table[row sep=crcr]{%
-20	1.14506517928489\\
-16	1.362400431239\\
-12	1.91627464354011\\
-8	3.2951954018796\\
-4	6.67580226057207\\
0	15.4826042711498\\
4	37.3414002507592\\
8	92.4573128574934\\
12	229.138476938298\\
16	574.377922896956\\
20	1450.7911206026\\
};
\addlegendentry{$\rho=0.0$}

\addplot [color=black, mark=+, mark options={solid, black}]
  table[row sep=crcr]{%
-20	1.14492025902587\\
-16	1.36149240337587\\
-12	1.91052959021697\\
-8	3.25938653844047\\
-4	6.45849563528776\\
0	14.1660038828634\\
4	30.0455004125021\\
8	57.0758459851947\\
12	89.2583766337681\\
16	116.111356911365\\
20	132.79919278204\\
};
\addlegendentry{$\rho=0.1$}

\addplot [color=black, mark=o, mark options={solid, black}]
  table[row sep=crcr]{%
-20	1.14434346197502\\
-16	1.35790534336025\\
-12	1.88825218165192\\
-8	3.12666829658553\\
-4	5.73356848289817\\
0	10.6550695140998\\
4	17.1091821801907\\
8	23.0125168794806\\
12	26.5632637181752\\
16	28.42731831831\\
20	29.4272768745575\\
};
\addlegendentry{$\rho=0.5$}

\addplot [color=black, mark=square, mark options={solid, black}]
  table[row sep=crcr]{%
-20	1.14362889038108\\
-16	1.35352040027196\\
-12	1.86189287930486\\
-8	2.98119686021205\\
-4	5.05963318128543\\
0	8.24130213557488\\
4	11.3481137444236\\
8	13.511990412044\\
12	14.5402337643168\\
16	15.0496899899918\\
20	15.3543675307171\\
};
\addlegendentry{$\rho=1.0$}

\end{axis}
\end{tikzpicture}%
	\caption{Condition number of \eqref{eq:subcov} for different regularization parameter $\rho$. $N_h \times N_v = 8 \times 8$, $R=4$ wavefronts.}
	\label{fig:cond_lambda}
\end{figure}
~
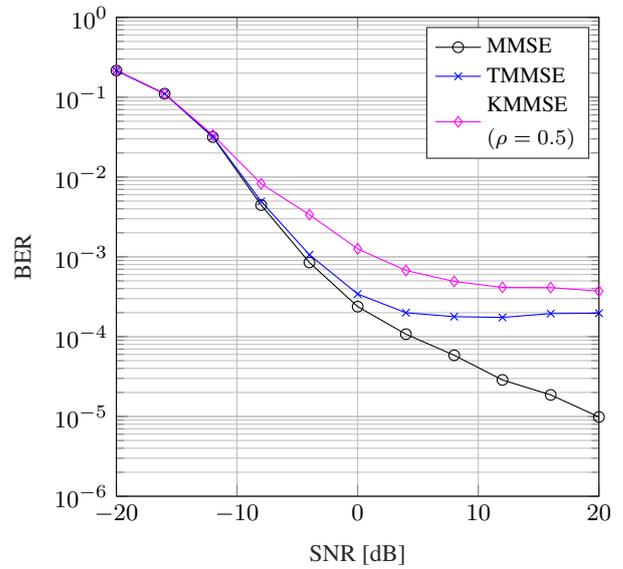
\begin{figure}[tb]
	\centering
%
%
\definecolor{mycolor1}{rgb}{1.00000,0.00000,1.00000}%
\begin{tikzpicture}

\begin{axis}[%
width=2.5in,
height=2.5in,
scale only axis,
xmin=-20,
xmax=20,
xlabel style={font=\color{white!15!black}},
xlabel={SNR [dB]},
ymode=log,
ymin=1e-06,
ymax=1,
yminorticks=true,
ylabel style={font=\color{white!15!black}},
ylabel={BER},
axis background/.style={fill=white},
xmajorgrids,
ymajorgrids,
yminorgrids,
legend style={font=\footnotesize,legend cell align=left, align=left, draw=white!15!black}
]
\addplot [color=black, mark=o, mark options={solid, black}]
  table[row sep=crcr]{%
-20	0.215326845\\
-16	0.11043396\\
-12	0.031633745\\
-8	0.004450225\\
-4	0.000849475\\
0	0.000237715\\
4	0.000107395\\
8	5.843e-05\\
12	2.868e-05\\
16	1.863e-05\\
20	9.86e-06\\
};
\addlegendentry{MMSE}

\addplot [color=blue,mark=x, mark options={solid, blue}]
  table[row sep=crcr]{%
-20	0.215331485\\
-16	0.11050138\\
-12	0.032035545\\
-8	0.00494125\\
-4	0.00106208\\
0	0.000342145\\
4	0.000199335\\
8	0.00017813\\
12	0.00017424\\
16	0.00019468\\
20	0.000196485\\
};
\addlegendentry{TMMSE}

\addplot [color=mycolor1, mark=diamond, mark options={solid, mycolor1}]
  table[row sep=crcr]{%
-20	0.215347585\\
-16	0.110712005\\
-12	0.03354346\\
-8	0.0082404\\
-4	0.003373465\\
0	0.0012597\\
4	0.000675255\\
8	0.00049212\\
12	0.00041396\\
16	0.000412285\\
20	0.00037213\\
};
\addlegendentry{KMMSE\\$(\rho=0.5)$}

\end{axis}
\end{tikzpicture}%
	\caption{$N_h \times N_v = 8 \times 8$, $R=4$ wavefronts.}
	\label{fig:ber_random}	
\end{figure}

\begin{figure}[tb]
	\centering
	\includegraphics[width=\linewidth]{./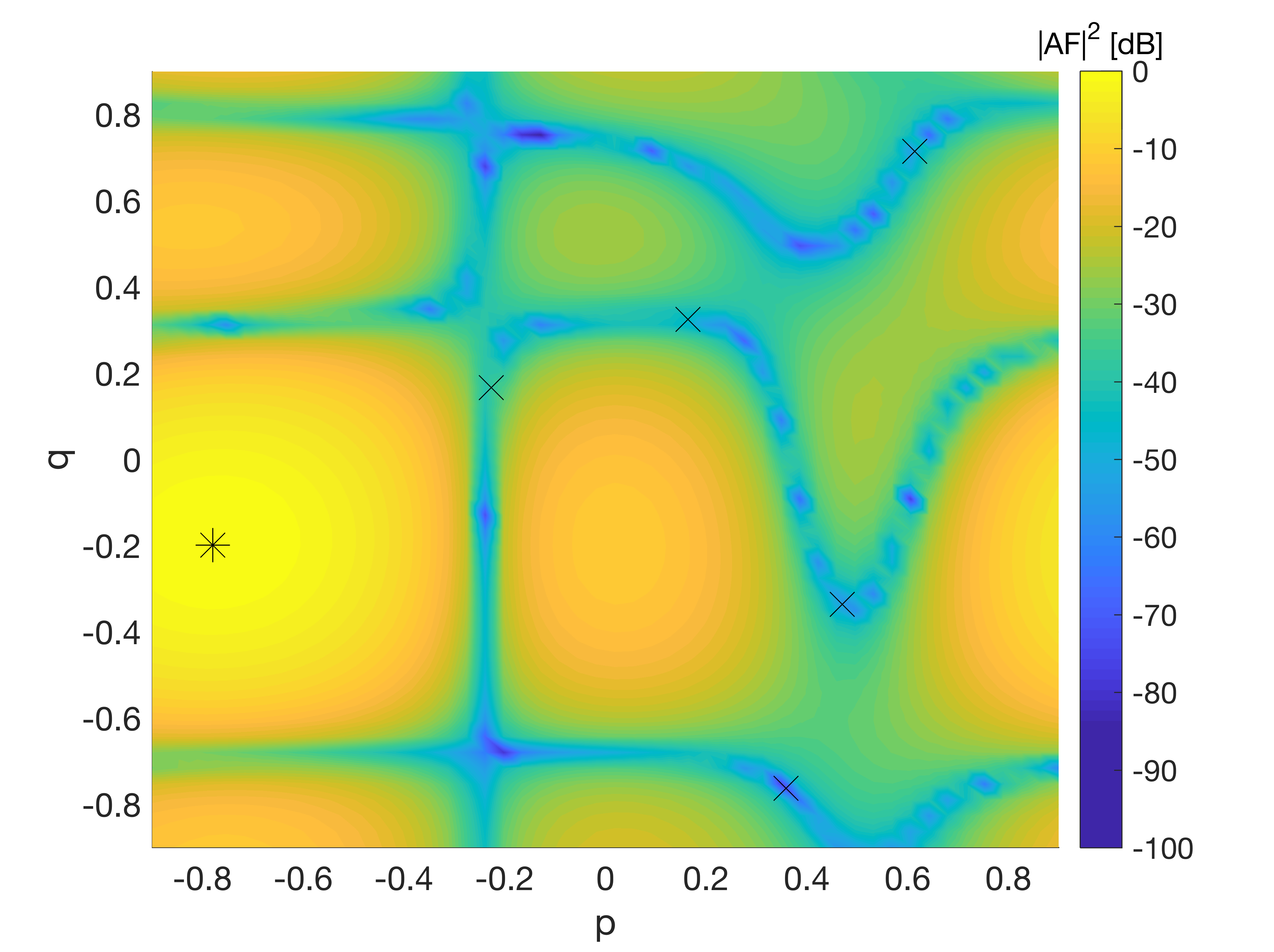}
	\caption{MMSE AF squared magnitude. $R=6$ wavefronts, $N_h \times N_v = 4 \times 4$. Asterisk denotes desired signal, cross interfering signal.}	
	\label{fig:bp_mmse}
\end{figure}
~
\begin{figure}[tb]
	\centering
	\includegraphics[width=\linewidth]{./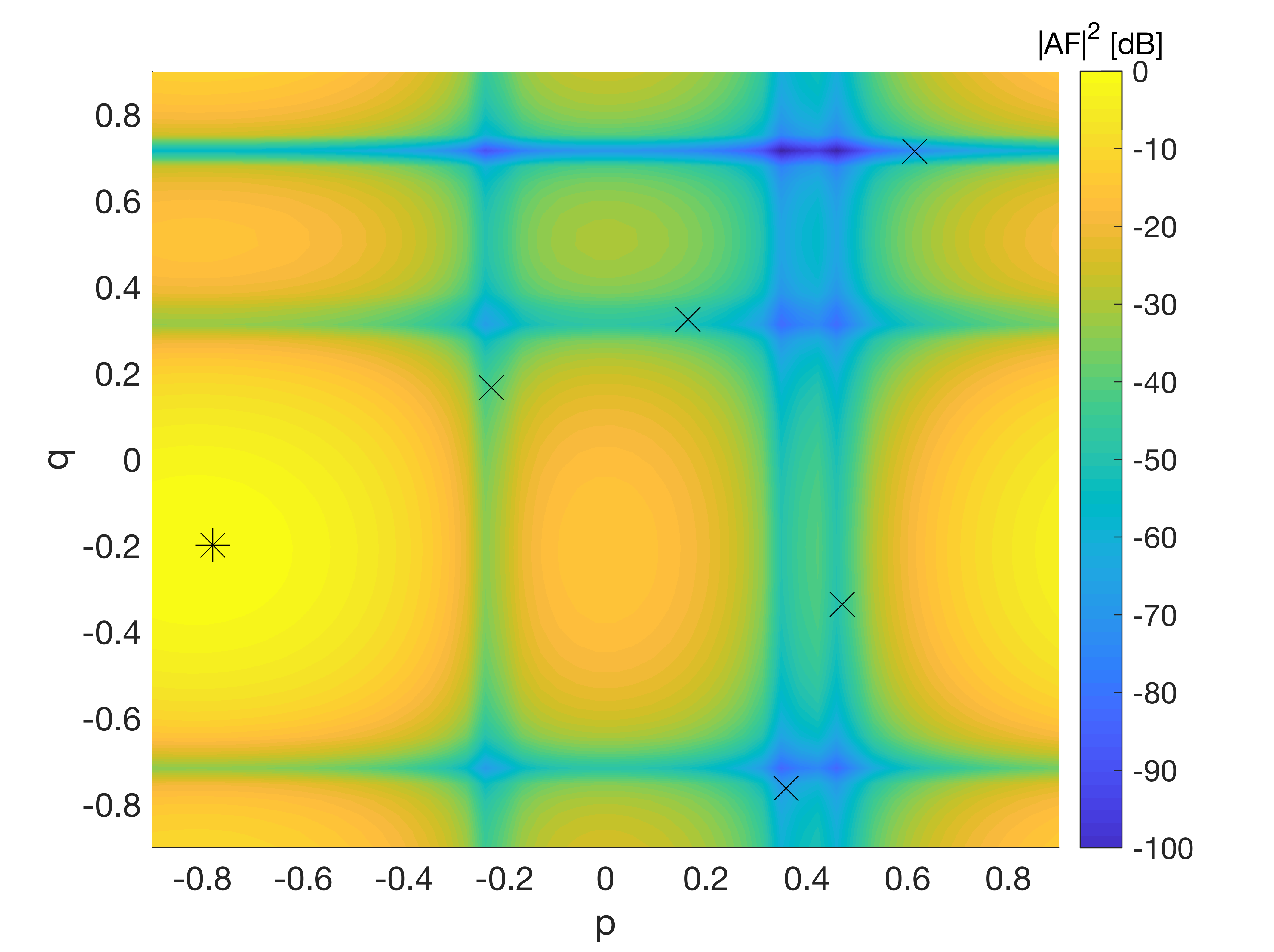}
	\caption{TMMSE AF squared magnitude. $R=6$ wavefronts, $N_h \times N_v = 4 \times 4$. Asterisk denotes desired signal, cross interfering signal.}	
	\label{fig:bp_tmmse}	
\end{figure}
~
\begin{figure}[tb]
	\centering
	\includegraphics[width=\linewidth]{./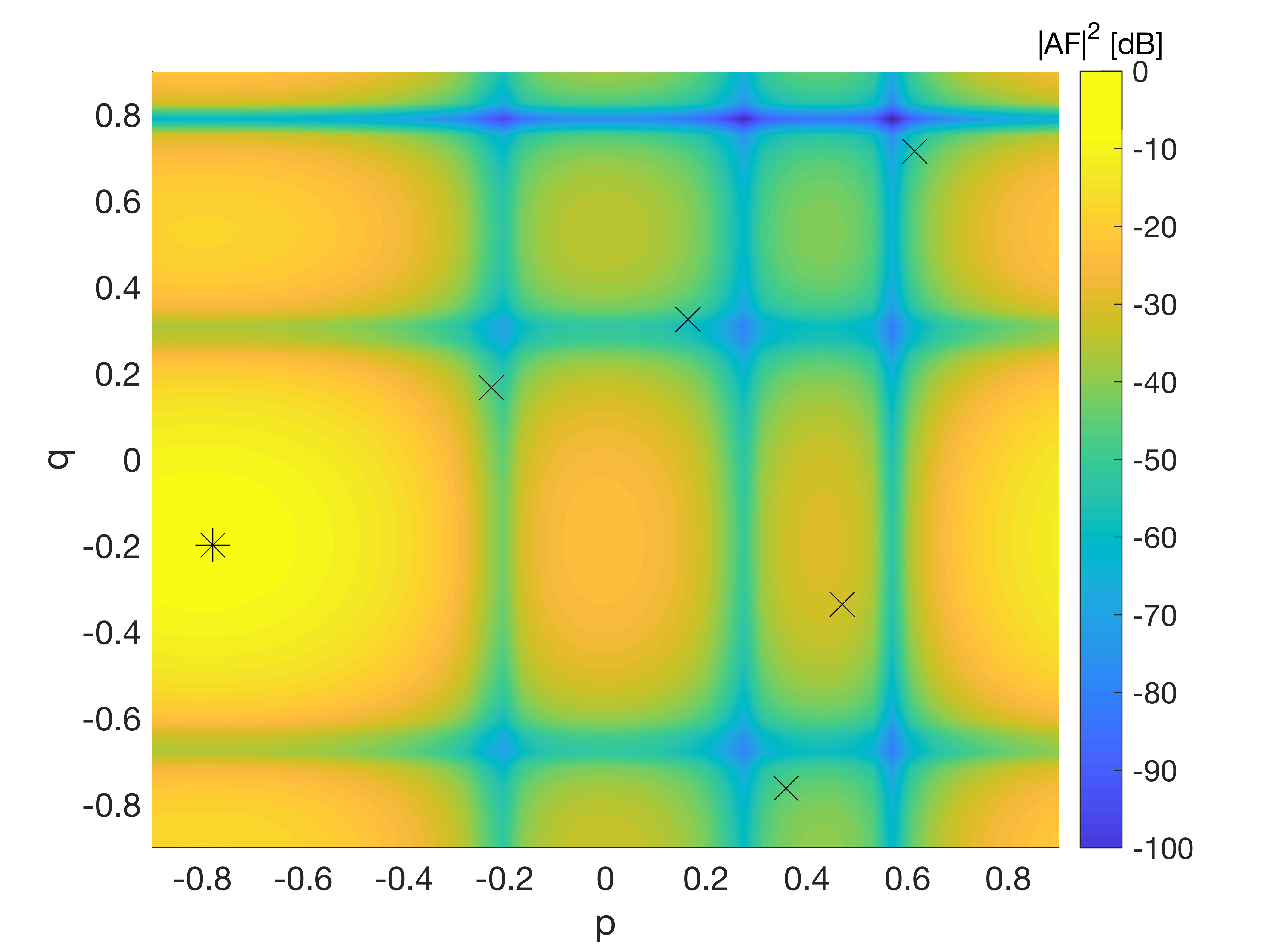}
	\caption{KMMSE AF squared magnitude. $R=6$ wavefronts, $N_h \times N_v = 4 \times 4$, $\rho=0.5$. Asterisk denotes desired signal, cross interfering signal.}	
	\label{fig:bp_kmmse}	
\end{figure}
~
\begin{figure}[tb]
	\centering
	\includegraphics[width=\linewidth]{./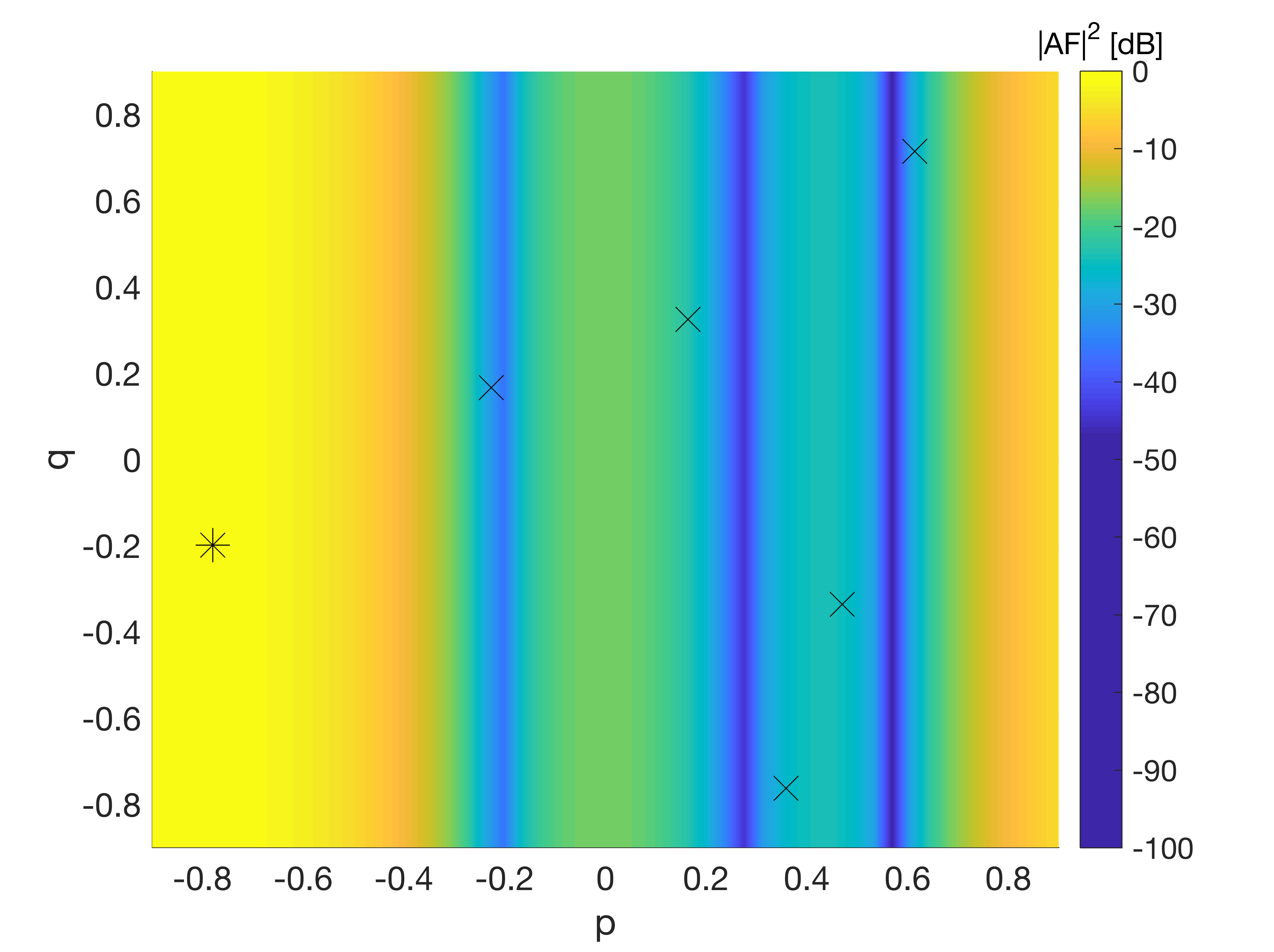}
	\caption{Horizontal KMMSE AF squared magnitude. $R=6$ wavefronts, $N_h= 4$, $\rho=0.5$. Asterisk denotes desired signal, cross interfering signal.}
	\label{fig:bp_kmmse_p}	
\end{figure}
~
\begin{figure}[tb]
	\centering
	\includegraphics[width=\linewidth]{./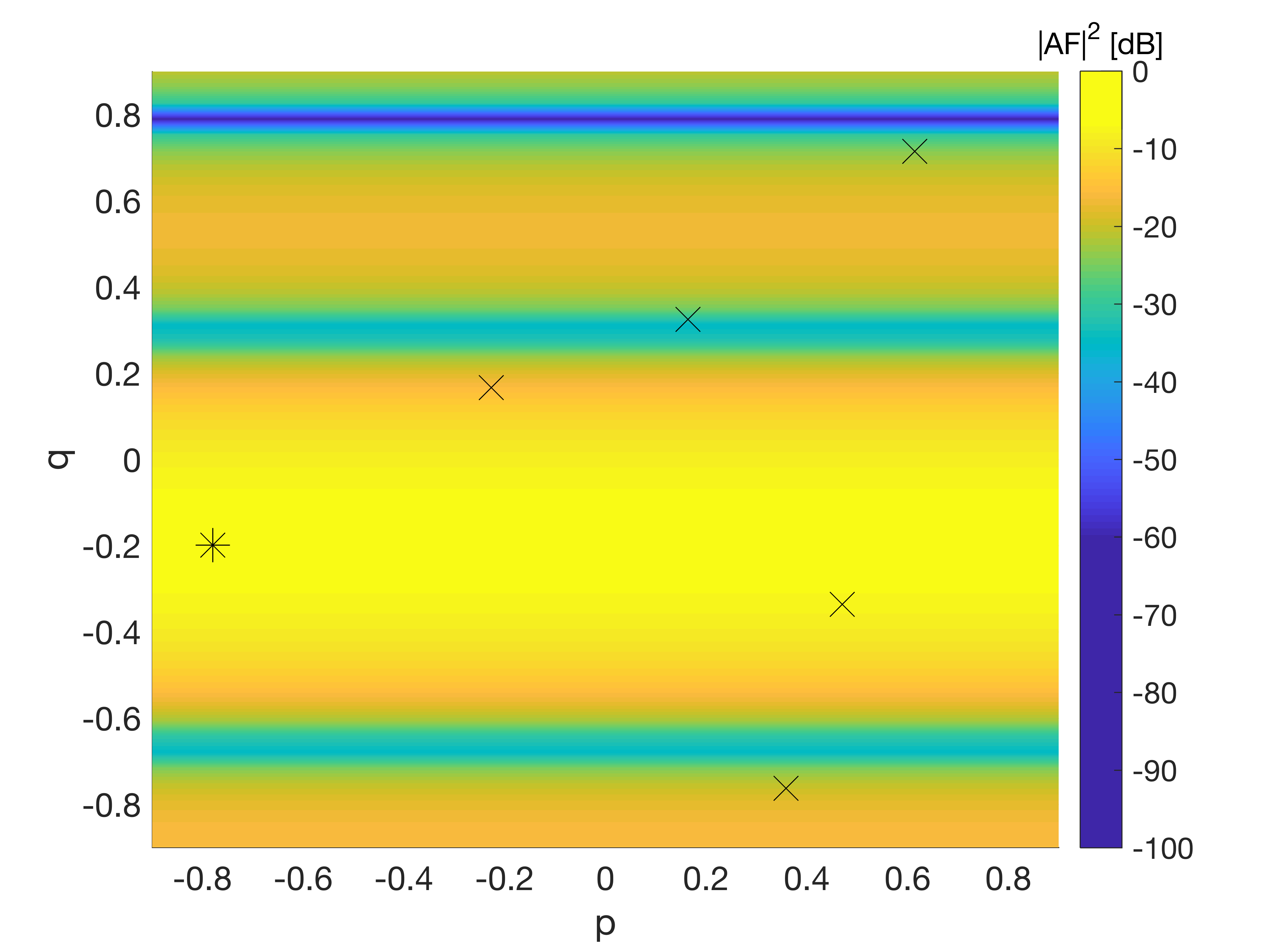}
	\caption{Vertical KMMSE AF squared magnitude. $R=6$ wavefronts, $N_v =4$, $\rho=0.5$. Asterisk denotes desired signal, cross interfering signal.}	
	\label{fig:bp_kmmse_q}		
\end{figure}

The computational complexity, measured in flops, is plotted as a function of the array size in Figure~\ref{fig:flops}. We use the MATLAB Lightspeed toolbox~\cite{minka_lightspeed_2007} for flops counting since it provides the approximate number of operations required for inverting matrices. This result shows that the proposed method substantially reduces the computational complexity of beamforming design. For an array of $16\times 16$ antennas, the complexity difference between MMSE and KMMSE is around three orders of magnitude. While KMMSE is inexpensive in all scenarios, TMMSE is costly for relatively small arrays due to the iterative optimisation procedure. For arrays of $8\times 8$ antennas, a set-up expected for 5G systems~\cite{ji_overview_2017}, both separable beamformers are less expensive than MMSE.

We investigate the influence of the regularization parameter $\rho$ on the KMMSE BER performance in Figure~\ref{fig:ber_lambda}. We observe that regularization plays a little role in the performance for low SNR ($< 0\,\si{\decibel}$). In this case, the noise term on \eqref{eq:subcov} has enough energy to complete the rank of the covariance matrix, decreasing its condition number~\cite{golub2012matrix}, thus making the horizontal and vertical MMSE beamformers numerically stable. However, for high SNR ($\geq 0\,\si{\decibel}$), regularization is paramount to achieve a satisfactory performance. This is because the noise term is not strong anymore to fill the covariance matrix rank in this case, and then the horizontal and vertical covariance matrices in \eqref{eq:subcov} become ill-conditioned. To tackle this issue, the regularization term fills the covariance matrix rank, decreasing the condition number, and thus making the regularized MMSE beamformers stable at high SNR. The link between the covariance matrix condition number and the regularization parameter $\rho$ is clarified in Figure~\ref{fig:cond_lambda}. We note that the covariance matrix becomes more well-conditioned at high SNR as we increase $\rho$. More specifically, for $\rho = 0$, the BER worsens with the SNR, owing to the increase in the condition number of the horizontal and/or vertical covariance matrices. For $\rho=0.1$, KMMSE exhibits a bad performance from $0,\si{\decibel}$ to $16\,\si{\decibel}$ SNR. In this range, the regularisation factor is not large enough to avoid performance deterioration due to ill-condition of the covariance matrix. Finally, for $\rho > 0.1$, we observe that KMMSE is not much affected by the covariance matrix condition. However, the regularisation term yields a BER bias, which increases with $\rho$. Figure~\ref{fig:ber_lambda} shows that $\rho=0.5$ provides the best stability-performance trade-off, thus we choose this value for the next simulations.

Unfortunately, the important computational complexity reduction observed in Figure~\ref{fig:flops} comes with source recovery degradation, as one can see in Figure~\ref{fig:ber_random}. This figure shows that TMMSE exhibits good performance from $-20\,\si{\decibel}$ to $0\,\si{\decibel}$, while KMMSE with regularization parameter $\rho=0.5$ performs similarly to the benchmark only from $-20\,\si{\decibel}$ to $-8\,\si{\decibel}$. At high SNR, the BER performance of the separable beamformers is heavily penalized compared to the benchmark. When one of the direction cosines ($p_r$ or $q_r$) of the interfering wavefronts is adjacent to those of the desired signal, the separable beamformers fail to recover it, yielding a significant number of bit errors. Since the direction cosines of all wavefronts are randomly selected according to a uniform distribution, it is rather common that the interfering wavefronts are close to the desired signal in either $p$- or $q$-domain. As a consequence, beamforming fails due to ill-conditioned covariance matrices, as discussed in the previous paragraph. At high SNR, the poor BER performance is especially accentuated because the noise component does not have enough power to fill the covariance matrix rank. However, whenever the wavefronts are sufficiently separated in space, the separable beamformers exhibit good source recovery performance. The presented results suggest that the proposed methods are appealing alternatives to the standard MMSE beamformer. In Figure~\ref{fig:flops}, the computational complexity of KMMSE, for example, is two orders of magnitude smaller than that of MMSE for $N_h \times N_v = 8\times 8$. On the other hand, Figure~\ref{fig:ber_random} indicates that KMMSE is $5\,\si{\decibel}$ apart from MMSE for the uncoded BER of $10^{-3}$.

To better understand why the separable beamformers are more sensitive to closely-spaced wavefronts, let us investigate their normalised array factor. We consider a scenario in which the proposed beamformers fail due to lack of degrees of freedom. For visualisation easiness, we consider a scenario where a $4\times 4$ array is applied to filter $R=6$ wavefronts in the following. In this case, each sub-array beamformer has only $4$ degrees of freedom and will fail to filter the $R=6$ signals. By contrast, the classical MMSE beamformer has $16$ degrees of freedom and is able to null the interfering wavefronts and recover the desired signal. Figures~\ref{fig:bp_mmse}, \ref{fig:bp_tmmse}, and \ref{fig:bp_kmmse}  show the magnitude of the MMSE, TMMSE, and KMMSE normalised array factors as functions of the direction cosines, respectively. One can see that the MMSE filter accurately places nulls at the interfering wavefronts directions, while a strong beam is pointed towards the desired signal. This is possible because the $16$-dimensional filter has sufficient degrees of freedom to separate the wavefronts. In contrast to the benchmark method, KMMSE does not accurately distribute nulls, hindering interference attenuation. In Figures~\ref{fig:bp_kmmse_p} and \ref{fig:bp_kmmse_q}, one can see that the KMMSE sub-beamformers do not have enough degrees of freedom to attenuate the interfering wavefronts. As a consequence, the undesired signal at $(p,q) =(0.5,-0.3)$ is not properly attenuated, as seen in Figure~\ref{fig:bp_kmmse}. We observe that only $3$ nulls are placed to attenuate $5$ interfering wavefronts in Figure~\ref{fig:bp_kmmse_p}. The same is observed in Figure~\ref{fig:bp_kmmse_q}. To solve this issue, one would need to increase the number of antennas to, at least, $N_h \times N_v = 6\times 6$. Figure~\ref{fig:bp_tmmse} reveals that TMMSE is more accurate than KMMSE at null placement. This is because the null locations are optimised as the alternating algorithm iterates. This accuracy is important especially at high SNR, as one can see in Figure~\ref{fig:ber_random}. We conclude that the separable beamformers are more sensitive to the number of impinging signals and closely spaced wavefronts than the classical MMSE beamformer due to the reduced degrees of freedom of the sub-beamformers. 

\section{Conclusion} \label{sec:conc}

We presented two beamforming methods that exploit array separability to reduce the computational complexity of the classical MMSE beamformer. The TMMSE filter is based on tensor algebra and minimises the MMSE by means of alternating minimisation, while the KMMSE filter relies on regularized sub-array MMSE beamforming. Our simulation results show that TMMSE provides moderate computational complexity reduction with small source recovery degradation. By contrast, KMMSE is computationally inexpensive but exhibits poorer BER performance at high SNR.  Therefore, TMMSE should be employed when source recovery performance is more important than computation efficiency, and KMMSE on the contrary. This work paves the way to future contributions, including the extension to massive MIMO architectures, e.g., hybrid analogue/digital transceivers. Furthermore, it would be of interest to validate the proposed methods with array responses simulated in an electromagnetic field simulator software.

\section{Appendix} \label{app:proof_urra}

It is well-known that the minimisers of \eqref{eq:it_out_lin1} and \eqref{eq:it_out_lin2} are given by the classical MMSE filters $\bm{w}_h = \bm{R}_{hh}^{-1} \bm{p}_{hs}$ and $\bm{w}_v = \bm{R}_{vv}^{-1} \bm{p}_{vs}$, respectively. In this appendix, we obtain the covariance matrices and cross-covariance vectors necessary to calculate these filters.  In our demonstrations, we consider only the horizontal sub-array. The statistics for the vertical sub-array are analogously derived.

First, let us represent \eqref{eq:tensor_model} in terms of the matrix unfoldings of $\mc{A}$. Unfolding this tensor along its first mode gives \cite{kolda_tensor_2009}
\begin{equation}
	\bm{X}[k] = [\mc{A}]_{(1)} (\bm{s}[k] \otimes \bm{I}_{N_v}) + \bm{B}[k].
\end{equation}
Now the horizontal sub-array input \eqref{eq:it_subsig_h} can be expressed as
\begin{align}
	\bm{u}_h[k] &\stackrel{(a)}{=} \left[[\mc{A}]_{(1)} (\bm{s}[k] \otimes \bm{I}_{N_v}) + \bm{B}[k]\right](1 \otimes \bm{w}_v^*)\\
	                    &\stackrel{(b)}{=} [\mc{A}]_{(1)} (\bm{s}[k] \otimes \bm{w}_v^*) + \bm{B}[k]\bm{w}_v^*, \label{eq:hinput}
\end{align}
where $(a)$ follows by considering $\bm{w}_v^* = 1 \otimes \bm{w}_v^*$,  and $(b)$ is the application of the mixed product property~\cite{liu_hadamard_2008}:
\begin{equation}\label{eq:kronprod}
(\bm{A} \otimes \bm{B})(\bm{C} \otimes \bm{D}) = (\bm{AC}) \otimes (\bm{BD})
\end{equation}
for any matrices $\bm{A}$, $\bm{B}$, $\bm{C}$, $\bm{D}$ with matching dimensions. The covariance matrix of \eqref{eq:hinput} is then given by
\begin{align}
	\bm{R}_{hh} = [\mc{A}]_{(1)} ( \bm{R}_{ss} \otimes \bm{w}_v^* \bm{w}_v^\tran ) [\mc{A}]_{(1)}^\hermit + \bm{R}_{cc}, \label{eq:cov_rh}
\end{align}
where $\bm{R}_{cc} = \esp{ \bm{c}[k] \bm{c}^\hermit[k]  }$, and $\bm{c}[k] = \bm{B}[k]\bm{w}_v^* \in \bmm{C}^{N_h}$. Note we have used the fact that $\esp{\bm{A} \otimes \bm{B}} = \esp{\bm{A}} \otimes \esp{\bm{B}}$ for matrices $\bm{A}$ and $\bm{B}$ with mutually independent elements, and that the inputs of $\bm{B}[k]$ and $\bm{s}[k]$ are uncorrelated. To calculate $\bm{R}_{cc}$, consider the element-wise representation of $\bm{c}[k]$:
\begin{equation}
[\bm{c}[k]]_{n_h} = \sum_{n_v=1}^{N_v} [\bm{B}[k]]_{n_h,n_v} [\bm{w}_v]_{n_v}^*,\quad n_h \in \{1,\ldots,N_h\}.
\end{equation}
The elements of $\bm{R}_{cc}$ are given by:
\begin{align}
&[\bm{R}_{cc}]_{n_h, n_h^\prime} = \esp{ [\bm{c}[k]]_{n_h} [\bm{c}[k]^\hermit]_{n_h^\prime} } \\
&= \esp{ \sum_{n_v=1}^{N_v} \sum_{n_v^\prime=1}^{N_v} \left[ \bm{B}[k] \right]_{n_h, n_v} \left[ \bm{B}[k] \right]_{n_h^\prime, n_v^\prime}^* \left[ \bm{w}_v \right]_{n_v} \left[ \bm{w}_v \right]_{n_v^\prime}^*  } \label{eq:noise_devel}
\end{align}
for $n_h,\,n_h^\prime \in \{1,\ldots, N_h \}$. As the AWGN vector has mutually independent elements, it follows that
\begin{equation}
\esp{\left[\bm{B}[k] \right]_{n_h, n_v} \left[ \bm{B}[k] \right]_{n_h^\prime, n_v^\prime}^*} = 0
\end{equation} 
for all $n_v \neq n_v^\prime$ and $n_h \neq n_h^\prime$. Therefore, the off-diagonal elements of $\bm{R}_{cc}$ are zero and those at the main diagonal are given by
	\begin{align}
[\bm{R}_{cc}]_{n_h, n_h} &= \sum_{n_v=1}^{N_v} \esp{\left[ \bm{B}[k] \right]_{n_h, n_v} \left[ \bm{B}[k] \right]_{n_h, n_v}^*} \left[\bm{w}_v\right]_{n_v} \left[\bm{w}_v\right]_{n_v}^* \\
&+\sigma_b^2 \sum_{n_v=1}^{N_v} \left[\bm{w}_v\right]_{n_v} \left[\bm{w}_v\right]_{n_v}^* = \sigma_b^2  \|\bm{w}_v\|^2_2 \label{eq:noise_devel_final}
\end{align}
and, consequently, we get $\bm{R}_{cc} = \sigma_b^2  \|\bm{w}_v\|^2_2 \bm{I}_{N_h}$, concluding the derivation of $\bm{R}_{hh}$. From the definition of $\bm{p}_{hs}$ and $\bm{u}_h[k]$, it follows that:
\begin{align}
\bm{p}_{hs} &= \esp{ \left( [\mc{A}]_{(1)} (\bm{s}[k] \otimes \bm{w}_v^*) + \bm{B}[k]\bm{w}_v^* \right) (s_d^*[k] \otimes 1)}\\
			   &= [\mc{A}]_{(1)} (\bm{s}[k] s_d^*[k] \otimes \bm{w}_v^*) = [\mc{A}]_{(1)} ( \Rss\edes \otimes \bm{w}_v^*),
\end{align}
finalising our proof. \qed

\section*{Acknowledgment}
This work is supported by the National Council for Scientific and Technological Development -- CNPq, CAPES/PROBRAL Proc. no. 88887.144009/2017-00, and FUNCAP.

\bibliographystyle{IEEEtran}
\bibliography{iet18}

\begin{thebibliography}{10}
\providecommand{\url}[1]{#1}
\csname url@samestyle\endcsname
\providecommand{\newblock}{\relax}
\providecommand{\bibinfo}[2]{#2}
\providecommand{\BIBentrySTDinterwordspacing}{\spaceskip=0pt\relax}
\providecommand{\BIBentryALTinterwordstretchfactor}{4}
\providecommand{\BIBentryALTinterwordspacing}{\spaceskip=\fontdimen2\font plus
\BIBentryALTinterwordstretchfactor\fontdimen3\font minus
  \fontdimen4\font\relax}
\providecommand{\BIBforeignlanguage}[2]{{%
\expandafter\ifx\csname l@#1\endcsname\relax
\typeout{** WARNING: IEEEtran.bst: No hyphenation pattern has been}%
\typeout{** loaded for the language `#1'. Using the pattern for}%
\typeout{** the default language instead.}%
\else
\language=\csname l@#1\endcsname
\fi
#2}}
\providecommand{\BIBdecl}{\relax}
\BIBdecl

\bibitem{larsson_massive_2014}
E.~G. Larsson, O.~Edfors, F.~Tufvesson, and T.~L. Marzetta, ``Massive {MIMO}
  for next generation wireless systems,'' \emph{IEEE Communications Magazine},
  vol.~52, no.~2, pp. 186--195, Feb. 2014.

\bibitem{schwarz_society_2016}
S.~Schwarz and M.~Rupp, ``Society in motion: challenges for {LTE} and beyond
  mobile communications,'' \emph{IEEE Communications Magazine}, vol.~54, no.~5,
  pp. 76--83, May 2016.

\bibitem{ji_overview_2017}
H.~Ji, Y.~Kim, J.~Lee, E.~Onggosanusi, Y.~Nam, J.~Zhang, B.~Lee, and B.~Shim,
  ``Overview of full-dimension {MIMO} in {LTE}-advanced pro,'' \emph{IEEE
  Communications Magazine}, vol.~55, no.~2, pp. 176--184, 2017.

\bibitem{ribeiro2018energy}
L.~N. Ribeiro, S.~Schwarz, M.~Rupp, and A.~L. de~Almeida, ``Energy efficiency
  of {mmWave} massive {MIMO} precoding with low-resolution {DACs},'' \emph{IEEE
  Journal of Selected Topics in Signal Processing}, vol.~12, no.~2, pp.
  298--312, 2018.

\bibitem{treitel_design_1971}
S.~Treitel and J.~L. Shanks, ``The design of multistage separable planar
  filters,'' \emph{IEEE Transactions on Geoscience Electronics}, vol.~9, no.~1,
  pp. 10--27, 1971.

\bibitem{bousse_tensor-based_2017}
M.~Bouss\'e, O.~Debals, and L.~De~Lathauwer, ``A tensor-based method for
  large-scale blind source separation using segmentation,'' \emph{IEEE
  Transactions on Signal Processing}, vol.~65, no.~2, pp. 346--358, 2017.

\bibitem{rupp_tensor_2015}
M.~Rupp and S.~Schwarz, ``A tensor {LMS} algorithm,'' in \emph{Proc. 2015
  {IEEE} {International} {Conference} on {Acoustics}, {Speech} and {Signal}
  {Processing} ({ICASSP})}, 2015, pp. 3347--3351.

\bibitem{rupp2015gradient}
------, ``Gradient-based approaches to learn tensor products,'' in \emph{Proc.
  2015 23rd {European} {Signal} {Processing} {Conference} ({EUSIPCO})}, 2015,
  pp. 2486--2490.

\bibitem{pinheiro_nonlinear_2016}
F.~C. Pinheiro and C.~G. Lopes, ``Nonlinear {adaptive} {algorithms} on
  {rank}-{one} {tensor} {models},'' \emph{arXiv preprint arXiv:1610.07520},
  2016.

\bibitem{ribeiro2015identification}
L.~N. Ribeiro, A.~L. de~Almeida, and J.~C.~M. Mota, ``Identification of
  separable systems using trilinear filtering,'' in \emph{Proc. of the 2015
  IEEE 6th International Workshop on Computational Advances in Multi-Sensor
  Adaptive Processing (CAMSAP)}, pp. 189--192.

\bibitem{paleologu2018linear}
C.~Paleologu, J.~Benesty, and S.~Ciochina, ``Linear system identification based
  on a {Kronecker} product decomposition,'' \emph{IEEE/ACM Transactions on
  Audio, Speech, and Language Processing}, 2018.

\bibitem{elisei2018efficient}
C.~Elisei-Iliescu, C.~Stanciu, C.~Paleologu, J.~Benesty, C.~Anghel, and
  S.~Ciochin{\u{a}}, ``Efficient recursive least-squares algorithms for the
  identification of bilinear forms,'' \emph{Digital Signal Processing},
  vol.~83, pp. 280--296, 2018.

\bibitem{van_trees_optimum_2002}
H.~L. Van~Trees, \emph{Optimum array processing: {Part} {IV} of detection,
  estimation and modulation theory}.\hskip 1em plus 0.5em minus 0.4em\relax
  Wiley Online Library, 2002, vol.~1.

\bibitem{wang_two-dimensional_2017_top}
Z.~Wang, W.~Liu, C.~Qian, S.~Chen, and L.~Hanzo, ``Two-{dimensional}
  {precoding} for 3-{D} {massive} {MIMO},'' \emph{IEEE Transactions on
  Vehicular Technology}, vol.~66, no.~6, pp. 5485--5490, Jun. 2017.

\bibitem{zhu2017hybrid}
G.~Zhu, K.~Huang, V.~K. Lau, B.~Xia, X.~Li, and S.~Zhang, ``Hybrid beamforming
  via the {Kronecker} decomposition for the millimeter-wave massive {MIMO}
  systems,'' \emph{IEEE Journal on Selected Areas in Communications}, vol.~35,
  no.~9, pp. 2097--2114, 2017.

\bibitem{miranda_generalized_2015}
R.~K. Miranda, J.~P.~C. da~Costa, F.~Roemer, A.~L. de~Almeida, and
  G.~Del~Galdo, ``Generalized sidelobe cancellers for multidimensional
  separable arrays,'' in \emph{Computational {Advances} in {Multi}-{Sensor}
  {Adaptive} {Processing} ({CAMSAP}), 2015 {IEEE} 6th {International}
  {Workshop} on}.\hskip 1em plus 0.5em minus 0.4em\relax IEEE, 2015, pp.
  193--196.

\bibitem{liu2018robust}
L.~Liu, J.~Xie, L.~Wang, Z.~Zhang, and Y.~Zhu, ``Robust tensor beamforming for
  polarization sensitive arrays,'' \emph{Multidimensional Systems and Signal
  Processing}, pp. 1--22, 2018.

\bibitem{ribeiro_low-complexity_nodate}
L.~N. Ribeiro, S.~Schwarz, M.~Rupp, A.~L.~F. de~Almeida, and J.~C.~M. Mota, ``A
  {low}-{complexity} {equalizer} for {massive} {MIMO} {systems} {based} on
  {array} {separability},'' in \emph{Proc. 2017 25th {European} {Signal}
  {Processing} {Conference} ({EUSIPCO})}, pp. 2522--2526.

\bibitem{haykin_adaptive_1996}
S.~Haykin, ``Adaptive filtering theory,'' \emph{Englewood Cliffs, NJ:
  Prentice-Hall}, 1996.

\bibitem{petersen_matrix_2012}
K.~B. Petersen and M.~S. Pedersen, ``The matrix cookbook,'' 2012.

\bibitem{ribeiro_tensor_2016}
L.~N. Ribeiro, A.~L.~F. De~Almeida, and J.~C.~M. Mota, ``Tensor beamforming for
  multilinear translation invariant arrays,'' in \emph{Proc. 2016 {IEEE}
  {International} {Conference} on {Acoustics}, {Speech} and {Signal}
  {Processing} ({ICASSP})}, 2016, pp. 2966--2970.

\bibitem{kolda_tensor_2009}
T.~Kolda and B.~Bader, ``Tensor {decompositions} and {applications},''
  \emph{SIAM Review}, vol.~51, no.~3, pp. 455--500, Aug. 2009.

\bibitem{comon2009tensor}
P.~Comon, X.~Luciani, and A.~L. De~Almeida, ``Tensor decompositions,
  alternating least squares and other tales,'' \emph{Journal of Chemometrics: A
  Journal of the Chemometrics Society}, vol.~23, no. 7-8, pp. 393--405, 2009.

\bibitem{liu_hadamard_2008}
S.~Liu and G.~Trenkler, ``Hadamard, {Khatri}-{Rao}, {Kronecker} and other
  matrix products,'' \emph{International Journal of Information and Systems
  Sciences}, vol.~4, no.~1, pp. 160--177, 2008.

\bibitem{palomar2010convex}
D.~P. Palomar and Y.~C. Eldar, \emph{Convex optimization in signal processing
  and communications}.\hskip 1em plus 0.5em minus 0.4em\relax Cambridge
  university press, 2010.

\bibitem{yener_interference_2001}
A.~Yener, R.~D. Yates, and S.~Ulukus, ``Interference management for {CDMA}
  systems through power control, multiuser detection, and beamforming,''
  \emph{IEEE Transactions on Communications}, vol.~49, no.~7, pp. 1227--1239,
  2001.

\bibitem{minka_lightspeed_2007}
\BIBentryALTinterwordspacing
T.~Minka, ``The {Lightspeed} {MATLAB} toolbox, version 2.8,'' 2017. [Online].
  Available: \url{https://github.com/tminka/lightspeed}
\BIBentrySTDinterwordspacing

\bibitem{golub2012matrix}
G.~H. Golub and C.~F. Van~Loan, \emph{Matrix computations}.\hskip 1em plus
  0.5em minus 0.4em\relax JHU Press, 2012, vol.~3.

\end{thebibliography}

\end{document}